\crefname{hypothesis}{Hypothesis}{Hypotheses}
\title{Exponential Selection and Feedback Stabilization of Invariant Subspaces of Quantum Trajectories\thanks{Submitted to the editors DATE.
\funding{This work is supported by the ANR project “Estimation et contrôle des systèmes quantiques ouverts" Q-COAST Projet ANR-19-CE48-0003, the ANR project QUACO ANR-17-CE40-0007,  the ANR project IGNITION ANR-21-CE47-0015, the ANR project “Quantum Trajectories" ANR-20-CE40-0024-01, and ANR project “Évolutions Stochastiques Quantiques" ESQuisses Projet-ANR-20-CE47-0014.}}}
\author{Nina H. Amini \thanks{Laboratoire des Signaux et Syst\`{e}mes (L2S), CNRS-CentraleSup\'{e}lec-Universit\'{e} Paris-Sud, Universit\'{e} Paris-Saclay, 3, Rue Joliot Curie, 91190, Gif-sur-Yvette, France. 
  (\email{nina.amini@centralesupelec.fr}).},  \and Maël Bompais \thanks{Laboratoire des Signaux et Syst\`{e}mes (L2S), CNRS-CentraleSup\'{e}lec-Universit\'{e} Paris-Sud, Universit\'{e} Paris-Saclay, 3, Rue Joliot Curie, 91190, Gif-sur-Yvette, France. 
  (\email{mael.bompais@centralesupelec.fr}).}, \and  Clément Pellegrini \thanks{Institut de Mathématiques, UMR5219, Université de Toulouse, CNRS, UPS, F-31062 Toulouse Cedex 9, France (\email{clement.pellegrini@math.univ-toulouse.fr}.)}}
\definecolor{lightgray}{gray}{0.7}
\DeclareMathOperator*{\argmin}{argmin}
\DeclareMathOperator\supp{supp}
\newtheorem{assumption}{Assumption}[section]
\newtheorem{remark}{Remark}[section]
\newenvironment{customproof}[1][\proofname]{\proof[#1]}{\endproof}
\renewcommand{\H}{\mathcal{H}}
\newcommand{\tr}[1]{\operatorname{tr} \left( #1 \right)}
\newcommand{\E}[1]{\mathbb E \left[ #1 \right]}
\newcommand{\EE}[2]{\mathbb E \left[ #1 \middle| #2 \right]}
\newcommand{\rhon}{\rho_{n}}
\newcommand{\rhonplus}[1]{\rho_{n+#1}}
\newcommand{\rhok}{\rho_{k}}
\newcommand{\rhoalpha}{\rho^{(\alpha)}}
\newcommand{\rhobeta}{\rho^{(\beta)}}
\newcommand{\rhoalphainfty}{\rho^{(\alpha)}_\infty}
\newcommand{\rhobetainfty}{\rho^{(\beta)}_\infty}
\renewcommand{\o}{\left(}
\renewcommand{\c}{\right)}
\newcommand{\alphabar}{\bar\alpha}
\newcommand{\ubar}{\bar u}
\newcommand{\normalize}[1]{ \frac{#1}{\tr{#1}}}
\newcommand{\Malphabar}{M_{\alphabar}}
\newcommand{\Halphabar}{\H_{\alphabar}}
\renewcommand{\leq}{\leqslant}
\renewcommand{\geq}{\geqslant}
\newcommand{\ketbra}[1]{\ket{#1}\bra{#1}}
\newcommand{\M}{\mathscr M}
\date{April 2023}
\begin{document}

\maketitle

\begin{abstract}
    We show that quantum trajectories become exponentially fast supported by one of their minimal invariant subspaces. Exponential convergence is shown in expectation using Lyapunov techniques. The proof is based on an in-depth study of the identifiability of the probability distributions generated in the different subspaces. We furthermore introduce a feedback control strategy that allows for the targeted convergence towards a desired subspace. This convergence is also achieved at exponential speed.
\end{abstract}

\begin{keywords}
quantum trajectories, Lyapunov techniques, quantum feedback control, exponential convergence
\end{keywords}

% REQUIRED
\begin{AMS}
 	81Q93, 93D15, 81P15, 93D23
\end{AMS}

\section{Introduction}
Controlling and manipulating open quantum systems, which are systems that interact with their environment \cite{davies1976quantum,breuer2002theory,gardiner2004quantum,haroche2006exploring,wiseman2009quantum}, 
plays a crucial role in the realization of various quantum technologies, including quantum computation, quantum communication, and quantum engineering. Quantum properties are highly susceptible to disturbances caused by the environment and operational actions.
%such as state preparation, measurements, and control.
Directly measuring open quantum systems has a disruptive effect on their evolution and hinders their manipulation. Consequently, most experiments employ the indirect measurement procedure, wherein the system interacts with a series of probes \cite{braginsky1995quantum, haroche2006exploring}. After each interaction between the system and a probe, a measurement is performed on the probe. The entanglement between the system and the probe enables inference of the system's evolution based on the information obtained during the measurement. This evolution is typically random and can be described using Markov chains or stochastic differential equations. These processes, known as quantum trajectories, have significantly advanced the understanding and study of various aspects of quantum mechanics \cite{davies1976quantum,belavkin1987non,belavkin1999measurement,wiseman2009quantum,carmichael2009open,barchielli2009quantum}.

Diverse methods can be employed to control quantum trajectories in order to achieve objectives such as state preparation, stabilization, and enhancing the accuracy of indirect measurements \cite{mabuchi2005principles, ahn2002continuous, armen2002adaptive,Ticozzi,Ticcoopen,Altafini,catana2014heisenberg}.
%To accomplish diverse tasks such as state preparation, stabilization, and improving the precision of indirect measurements  \cite{mabuchi2005principles, ahn2002continuous, armen2002adaptive,Ticozzi,Ticcoopen,Altafini,catana2014heisenberg},
%  Various approaches have been employed to control quantum trajectories for tasks like stabilization, state preparation, state transfer, and more, see e.g., \cite{mabuchi2005principles, ahn2002continuous, armen2002adaptive,Ticozzi,Ticcoopen,Altafini,catana2014heisenberg}. %\cite{wisemanmilburn}. 
% %\cite{BenoistTicco}
%These approaches encompass 
%different approaches can be employed to control quantum trajectories. 
These approaches encompass time-dependent as well as state-based feedback control strategies. 
%Closed-loop control strategies, such as state-based feedback, 
The latter are widely recognized for their ability to robustly control open quantum systems \cite{doherty1999feedback,van2005feedback,sayrin2011real,mirrahimi2007stabilizing}.

%One of the models that has attracted significant attention and has initiated numerous advancements is the model of quantum non-demolition measurement. This model has been popularized by the Serge Haroche group, this has been at the cornerstone to the manipulation and observation of photon cavities without destroying them. This model has been developped in various situation, from a fundamental aspect with link to quantum collapse towards concrete applications where the theory of control has played a major role. 

%The model of quantum non-demolition (QND) measurement has garnered considerable attention and sparked important progress. Serge Haroche's group has played a pivotal role in implementing this model. It has enabled manipulating and observing photons in cavities without destroying them. 
One of the models that has attracted significant attention and initiated numerous advancements is the quantum non-demolition (QND) measurement model. It was in particular implemented by Serge Haroche's group to manipulate and observe photons in cavities without destroying them, and has been later developed in several scenarios, from fundamental aspects of quantum collapse to practical applications where control theory has played a significant role \cite{sayrin2011real}. The key feature of this model is that, under some specific conditions, the system converges to a pointer state. More precisely, a pointer state is a particular state of an orthonormal basis prescribed by the physical model (for example, the Fock states for photons). 
%The mentioned conditions boil down to the commutativity of the operators describing the dynamics and an identifiability assumption. 
Within this framework, precise results of large-time behavior, speed of convergence, stability, robustness, and control have been obtained \cite{BEnoistAHP,BenrardBauer,BenoistPelCMP,amini2013feedback,Fraasnondemol,Fraasnondemol2}. The fact that quantum trajectories converge to pointer states can be referred to as the selection of pointer states. A control strategy can be designed to render this selection deterministic and force, for example, the cavity to have a prescribed number of photons in long time \cite{sayrin2011real}. In continuous time, stabilizing exponentially eigenstates of the measurement operator in the context of QND measurements with state-based feedback and noise-assisted feedback have been respectively addressed in \cite{Liang2018j} and \cite{cardona2020exponential}. 
%For generic measurements, the asymptotic stability of subspaces has been explored in \cite{benoist2017exponential}. In this general case, pure state stabilization using Markovian feedback has been established in \cite{ticozzi2008quantum}. 

The particularity of QND measurements is that each pointer state corresponds to an invariant subspace of dimension one of the dynamics. More precisely, with and without measurements, if the system is initialized in a pointer state, it will remain in this state (this is the essence of the name non-demolition). The one-dimensional nature of the invariant spaces simplifies their investigation, facilitating the study of many aspects such as the identifiability of different pointer states, convergence speed, and stabilization. QND models can indeed be expressed as mixtures of i.i.d. models, as highlighted in \cite{BenrardBauer,BEnoistAHP,BenoistJPHYSA,BenoistPelCMP}. As a consequence, classical probability theory can be effectively utilized to achieve the convergence result.
%QND models can be described in terms of mixtures of i.i.d model as underlined in \cite{BenrardBauer,BEnoistAHP,BenoistJPHYSA,BenoistPelCMP,benoisGamb} and classical probability theory can be applied efficiently to obtain the convergence result. 

%Without QND assumption, and 

%averaging over the measurement results, the dynamic of a quantum system is described by a quantum channel %(here, we consider discrete time evolution). 
When considering generic measurements, not exclusively QND, the convergence of quantum trajectories can be studied through the decomposition of the Hilbert space. In analogy with classical Markov chain, the Hilbert space describing the quantum system can be decomposed into recurrent and transient parts \cite{baumgartner2012structures,carbone2016irreducible}. %\cite{carbone2016irreducible,baumgartner2012structures,fagnola2003transience,cirillo2015decompositions,Wo12}.
The transient part is the region of the Hilbert space from which the dynamics escape over time. %\cite{carbone2016irreducible,baumgartner2012structures,fagnola2003transience,BenoistTicco}.
The recurrent part, which is the orthogonal complement of the transient part, can be further decomposed into minimal invariant subspaces. 
These minimal invariant subspaces have arbitrary dimensions and are in one-to-one correspondence with minimal invariant states of the quantum channel
\cite{carbone2016irreducible,baumgartner2012structures,fagnola2003transience} (as the recurrence classes of a finite state Markov chain which are the support of minimal invariant measures). Roughly speaking, quantum trajectories will select one of the minimal invariant subspaces as the amount of information obtained through indirect measurement grows, provided that these subspaces can be identified from one another.
%The notion of identifiability provides a criterion ensuring the identification of minimal invariant subspaces based on the statistics of measurement outcomes. 
Such a result has been addressed for discrete-time quantum trajectories in~\cite{BePelLin} and continuous-time quantum trajectories in~\cite{MR4318593}.

This article considers generic discrete-time measurement processes within the context where the Hilbert space consists solely of the recurrent part. Then the Hilbert space can be decomposed (possibly non-uniquely) into a direct sum of minimal invariant subspaces. Our aim is twofold. First, we show that the selection of minimal invariant subspace is exponentially fast in expectation. This improves all the known results in this domain, where convergence considerations have been addressed only in an almost sure sense \cite{MR4318593,BePelLin}. Second, motivated by the exponential convergence, we design a feedback control strategy that stabilizes a targeted minimal invariant subspace at exponential speed again. To our knowledge, this is the first result with state-based feedback stabilizing a chosen minimal invariant subspace in a natural framework. Other studies have previously been conducted using a switching of the environment \cite{grigoletto2021stabilization,liang2022switching}, and Markovian output feedback as well as state-based feedback to stabilize a one-dimensional target subspace \cite{ticozzi2008quantum,ticozzi2012stabilization}. 

Surprisingly, showing the exponential convergence in the case of indirect measurement and in the control setup does not require strengthening the identifiability condition of \cite{BePelLin}. It is important to note that our goal is to achieve exponential convergence in mean. This, in principle, requires a detailed understanding of the dynamics for all states of the system, whereas identifiability only concerns the invariant states. A part of our contribution is to show that the identifiability of the invariant states implies the identifiability of all states supported in any two different invariant subspaces. More precisely, we show that we have a uniform identifiability, meaning we need only the probability of a finite sequence of measurement results to identify the invariant subspaces. To address the feedback stabilization of a chosen invariant subspace, we develop a feedback control strategy that consists of acting on the system at regular intervals. Our approach draws inspiration from prior research conducted on discrete time quantum trajectories in the context of QND measurements, for instance, on the LKB photon box \cite{mirrahimi2009feedback,amini2013feedback,amini2011design}. 
 The aforementioned studies solely focus on convergence, disregarding considerations related to the stabilization rate over time. Consequently, our present work surpasses previous findings by demonstrating an exponential stabilization rate and extending its applicability to arbitrary invariant subspaces, not restricted to QND measurements.

The paper is structured as follows. In Section \ref{sec:Prese}, we present the model of quantum trajectories, the minimal invariant space decomposition, and the identifiability assumption. This section provides the essential ingredients for showing exponential convergence.
%In Section \ref{sec:openloop}, we show the exponential convergence in the so-called open-loop situation. Open-loop means that the control is turned off. In this case only the measurement back action yields the asymptotic selection of a minimal invariant subspace. We show that this selection is exponentially fast by exploiting the identifiability assumption.
In section \ref{sec:openloop}, we show the exponential selection of a minimal invariant subspace. In Section \ref{sec:feedback}, we present the exponential stabilization of a targeted minimal invariant subspace using feedback control. As will be seen throughout the article, our approach is based on the use of Lyapunov functions.
\section{Quantum trajectories and identifiability of probability measures}\label{sec:Prese}

This section is devoted to the presentation of the quantum trajectory model. In particular, we shall make precise the assumptions that will be used throughout the paper, namely the decomposition of the Hilbert space with no transient part and the identifiability. 

We consider a quantum system undergoing discrete-time indirect measurement. The Hilbert space of the system is denoted by $\H \cong\mathbb  C^d$ and the state space is the set of density matrices
$$\mathcal D=\{ \rho \in \mathcal B(\mathcal H) \mid \, \rho=\rho^\dag,\,\rho \geq 0,\, \tr{\rho}=1\},$$ 
where $ \mathcal{B}(\mathcal{H})$  is the set of linear operators on  $\mathcal{H}$. With an ideal detector, the state of the system evolving over time is a Markov chain $(\rho_n)$ defined by the transitions
\begin{equation}\label{traj}
    {\rho}_{n+1}=\frac{V_{i_n}{\rho}_{n}V_{i_n}^\dag}{\operatorname{tr}\left(V_{i_n}{\rho}_{n}V_{i_n}^\dag\right)},
\end{equation}
depending on the random measurement result $i_n$ taking values in a finite set $\mathcal O.$ The probability to detect the measurement result $i_n=i$ at time $n$ knowing that the state is $\rho_n$ just before the measurement is given by $\operatorname{tr} (V_{i}{\rho}_{n}V_{i}^\dag ).$ The operators $V_i$ are referred to as Kraus operators and satisfy $\sum_{i \in \mathcal O} V_i^\dag V_i=I_d,$ with $I_d$ the identity operator on $\H.$ 
%Equation \eqref{traj} describes the dynamics of the system in the so-called \textit{open-loop} situation, meaning that only measurement affects the evolution of the system, as opposed to the \textit{closed-loop} situation, where a feedback control is involved and modifies the dynamics accordingly. 
In this paper, we shall consider only perfect detection, but everything can be easily generalized to imperfect detection models.\\

The Kraus operators $V_i,~i\in\mathcal O,$ define a quantum channel %\comment{M: Est-ce qu'on dit que c'est une application CPTP ?} \textcolor{red}{C/ cela depend du degre de precision que tu veux donner, on peut rajouter que c'est CPTP sans rentrer dans les defs de CP, ca n'apporte rien} 
$\Phi$ by
$$\Phi(X)=\sum_{i\in\mathcal O}V_i XV_i^\dag$$
for all $X\in\mathcal B(\mathcal H)$. The map $\Phi$ is completely positive and trace-preserving. Its adjoint is denoted by $\Phi^\dag$. The quantum channel and the quantum trajectory are intimately linked. Indeed,
$$\mathbb E[\rho_{n+1}\vert\rho_n]=\sum_{i\in\mathcal O}\frac{V_i\rhon V_i^\dag}{\tr{V_i\rhon V_i^\dag}}\tr{V_i\rhon V_i^\dag}=\Phi(\rhon)$$
and then for all $n\in\mathbb N$, $\mathbb E[\rho_n]=\Phi^n(\rho)$ if $\rho$ is the initial state. For $n\in\mathbb N$, the set $\mathcal O^n$ represents the words that can be recorded during $n$ consecutive measurements. We introduce the probability measure $\mathbb P_\rho$ defined for all $n$ and all $I=\{i_1,\ldots,i_n\}\in\mathcal O^n$ by
\begin{equation*}\label{def:prob}\mathbb P_\rho \o \{i_1,\ldots,i_n\} \c =\tr{V_I\rho V_I^\dag},
\end{equation*}
where $V_I=V_{i_n}\ldots V_{i_1}.$ The probability measure $\mathbb P_\rho$ is naturally extended on $\mathcal O^{\mathbb N}$ endowed with the cylinder $\sigma-$algebra (using Kolmogorov consistency criterion, see \cite{benoist2019invariant,BePelLin}). We will not dwell on the details of this construction and just refer to the definition of $\mathbb P_\rho$ for words of finite size. Starting from an initial state $\rho$, if we observe $I=\{i_1,\ldots,i_n\}$ during the first $n$ measurements - this happens with probability $\mathbb P_{\rho}(I)$ -, then the quantum trajectory state at time $n$ is
%Note that if we have observed $I=\{i_1,\ldots,i_n\}$ during the first $n$ measurements starting from the initial $\rho$ - this happens with probability $\mathbb P_{\rho}(I)$-, then at time $n$ the quantum trajectory state is
$$\rho_n=\frac{V_I\rho V_I^\dag}{\tr{V_I\rho V_I^\dag}}=\frac{V_{i_n}\ldots V_{i_1} \rho V_{i_1}^\dag\ldots V_{i_n}^\dag}{\tr{V_{i_n}\ldots V_{i_1} \rho V_{i_1}^\dag\ldots V_{i_n}^\dag}}.$$
A particular useful fact that we shall use extensively is
$$\Phi^n(\rho)=\sum_{I\in\mathcal O^n}V_I\rho V_I^\dag.$$
This indeed implies that for all $n,\,p$, we have
$$\mathbb E[\rho_{n+p}\vert\rho_n]=\sum_{I\in\mathcal O^p}\frac{V_I\rhon V_I^\dag}{\tr{V_I\rhon V_I^\dag}}\tr{V_I\rhon V_I^\dag}=\Phi^p(\rhon).$$

Now let us discuss the invariant states of $\Phi$. To this end we introduce
$$\mathcal F_{\Phi}=\{X\in\mathcal B(\mathcal H) \mid \Phi(X)=X\}$$
for the set of fixed points. The set of invariant states is the convex set given by $\mathcal D_\Phi=\mathcal D\cap \mathcal F_\Phi$. Its extreme points are called minimal invariant states. 

Quantum channels are usually compared with transition operators of classical Markov chains. In particular, transient and recurrent parts for quantum channels can be defined in analogy with Markov chains \cite{BePelLin,fagnola2003transience,Bardet,MR3626627,carbone2016irreducible}. In our context, the transient part is defined as follows \cite{baumgartner2012structures}:
$$\mathcal T=\{x\in\mathcal H\mid\langle x,\Phi^n(\rho)x\rangle\xrightarrow[n\to\infty]{}0\},$$
and the recurrent part as  $\mathcal R=\mathcal T^\perp$. The set $\mathcal R$ supports the invariant states of $\Phi$, meaning that for any $\rho\in \mathcal D_\Phi$, $\supp{\rho}\subset \mathcal R$. In the paper, we concentrate on the case $\mathcal T=\{0\}.$ Then the Hilbert space $\mathcal H =\mathcal R$ accepts an orthogonal direct sum decomposition $$\mathcal H=\bigoplus_{\alpha=1}^\ell \H_\alpha,$$
where each $\mathcal H_\alpha$ is the support of a minimal invariant state. The subspaces $\H_\alpha$ are minimal enclosures for $\Phi$ (see \cite{baumgartner2012structures,carbone2016irreducible}), the term enclosure meaning that if $\supp\rho \subset \mathcal H_\alpha,$ then $\supp \Phi(\rho) \subset\mathcal H_\alpha,$ while the term minimal means that any subspace strictly included in $\H_\alpha$ does not verify this property. In the sequel, we shall denote $d_\alpha$ the dimension of $\mathcal H_\alpha$, and $\rho_{\infty}^{(\alpha)}\in\mathcal D_\Phi$ the minimal invariant state with support $\mathcal H_\alpha$. The minimality can also be understood as the fact that no other invariant state of $\Phi$ has its support strictly included in $\mathcal H_\alpha$. This decomposition is in general not unique (see \cite{baumgartner2012structures,carbone2016irreducible}), and we pick an arbitrarily fixed one.
%\comment{N: maybe we can replace the notation $\rho_{\infty}^{(\alpha)}$ by $\rho_{\textrm{inv}}^{(\alpha)}?$}

For each $\rho_{\infty}^{(\alpha)}$, we consider the positive associated operator $M_\alpha$ satisfying $ \Phi^\dag(M_\alpha)=M_\alpha$ and $\operatorname{tr}(M_\alpha\rho^{(\beta)}_\infty)=\delta_{\alpha,\beta}$. In the case $\mathcal T=\{0\}$, each operator $M_\alpha$ is the orthogonal projector on $\H_{\alpha}$ and their sum satisfy $\sum_{\alpha=1}^\ell M_\alpha=I_d.$ One can show that the Kraus operators are block diagonal with respect to the decomposition $\bigoplus_{\alpha=1}^\ell \H_\alpha$ (see e.g., \cite{carbone2021absorption,Wo12}). In particular, this implies that if the initial state of the quantum trajectory is supported in $\mathcal H_\alpha$, then the quantum trajectory remains supported in $\mathcal H_\alpha$ at all time. 
These properties are equivalent to say that for all $i\in\mathcal O$ and all $\alpha,$ 
$M_\alpha V_i=V_iM_\alpha.$

%\comment{N: it is better to explain better these properties or add references} we have $M_\alpha V_i=V_iM_\alpha.$ Indeed, it is always possible in this situation to find a basis where the Kraus operators are block diagonal respecting the decomposition $\bigoplus_{\alpha=1}^\ell \H_\alpha.$ 

%In the sequel, we shall use the notation $\rho^{(\alpha)}$ to denote a state whose support is in $\mathcal H_\alpha$. 
%In particular, it is easy to check that if the initial state of the quantum trajectory is supported in $\mathcal H_\alpha$, then the quantum trajectory remains supported in $\mathcal H_\alpha$ at all time. \textcolor{blue}{This comes for example from the commutation relation $M_\alpha V_i=V_iM_\alpha,$ for all $i\in\mathcal O$ and all $\alpha$.}

At first glance, the reduction to the case $\mathcal T=\{0\}$ can be considered restrictive, but the results are already remarkable in this situation. The general case $\mathcal T\neq\{0\}$ is trickier, and \cite{BePelLin} provides information on this aspect, but does not touch upon the topics of exponential convergence and control.
%and considerations regarding it can be found in \cite{BePelLin} although exponential convergence and control are not addressed. 
Furthermore, our context is a generalization of the QND experiment. In the QND setting, all the invariant minimal subspaces are of dimension one, that is
$$\mathcal H=\bigoplus_{\alpha=1}^\ell \mathbb C\vert e_\alpha\rangle,$$
where $\{\vert e_\alpha\rangle,~\alpha=1,\ldots,\ell\}$ is an orthonormal basis usually called pointer basis, as mentioned in the introduction.

In \cite{BePelLin} for discrete-time quantum trajectories and in \cite{MR4318593} for continuous-time quantum trajectories, it is shown that quantum trajectories converge towards a minimal invariant subspace $\mathcal H_\alpha$ for some $\alpha\in\{1,\ldots,\ell\}$. This requires an identifiability assumption, expressed as follows in the discrete case.
\begin{assumption}[\textbf{ID}]
For any $(\alpha,\beta)$ with $\alpha\neq \beta$, there exists a sequence $I=(i_1,\ldots, i_p)$ such that
$$\tr{V_I \rho_{\infty}^{(\alpha)} V_I^\dag} \neq \tr{V_I \rho_{\infty}^{(\beta)} V_I^\dag}.$$
\end{assumption}
The next proposition establishes the selection of a minimal invariant subspace.
\begin{proposition}\label{prop:convergence}
    Let $\rho\in\mathcal D$ and let $(\rho_n)$ be the quantum trajectory with initial state $\rho$. Under Assumption \textbf{(ID)}, there exists a random variable $\Upsilon$ valued in $\{1,\ldots,\ell\}$ such that
    $$\mathbb P[\Upsilon=\alpha]=\tr{M_\alpha\rho}$$
    for all $\alpha \in \{1,\ldots,\ell \}$ and
    $$\lim_{n\to\infty}\tr{M_{\Upsilon}\rho_n}=1\quad a.s.$$
\end{proposition}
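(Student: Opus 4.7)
The plan is to build, for each minimal enclosure index $\alpha$, a bounded martingale whose almost sure limit identifies $\Upsilon$. Set $X_n^{(\alpha)} := \operatorname{tr}(M_\alpha \rho_n)$. Using $\Phi^\dag(M_\alpha) = M_\alpha$ together with $\mathbb E[\rho_{n+1}\mid \rho_n] = \Phi(\rho_n)$ gives $\mathbb E[X_{n+1}^{(\alpha)}\mid \rho_n] = \operatorname{tr}(\Phi^\dag(M_\alpha)\rho_n) = X_n^{(\alpha)}$, so each $X^{(\alpha)}$ is a $[0,1]$-valued martingale; it therefore converges almost surely and in $L^1$ to some $X_\infty^{(\alpha)}$, and $\sum_\alpha X_n^{(\alpha)} = 1$ at every time because $\sum_\alpha M_\alpha = I_d$.

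Next, I would interpret these martingales as Radon--Nikodym densities between path measures. With $p_\alpha := \operatorname{tr}(M_\alpha \rho)$ and, when $p_\alpha > 0$, $\rho^{(\alpha)} := M_\alpha \rho M_\alpha / p_\alpha$, the commutation $M_\alpha V_I = V_I M_\alpha$ yields
\begin{equation*}
\mathbb P_\rho(I) = \sum_\alpha \operatorname{tr}(V_I M_\alpha \rho M_\alpha V_I^\dag) = \sum_\alpha p_\alpha\, \mathbb P_{\rho^{(\alpha)}}(I)
\end{equation*}
for every finite word $I$, hence $\mathbb P_\rho = \sum_\alpha p_\alpha \mathbb P_{\rho^{(\alpha)}}$ on the cylinder $\sigma$-algebra; the same computation identifies $X_n^{(\alpha)}/p_\alpha$ with the density of $\mathbb P_{\rho^{(\alpha)}}$ relative to $\mathbb P_\rho$ on $\mathcal F_n := \sigma(i_1,\ldots,i_n)$.

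The crux is to convert Assumption \textbf{(ID)} into mutual singularity of the measures $\mathbb P_{\rho^{(\alpha)}}$. Because $\mathcal H_\alpha$ is a minimal enclosure, $\rho_\infty^{(\alpha)}$ has full rank on $\mathcal H_\alpha$, so any state $\rho^{(\alpha)}$ supported in $\mathcal H_\alpha$ satisfies $\rho^{(\alpha)} \leq C \rho_\infty^{(\alpha)}$ for some constant $C$, which yields $\mathbb P_{\rho^{(\alpha)}} \ll \mathbb P_{\rho_\infty^{(\alpha)}}$ on the whole cylinder $\sigma$-algebra. Moreover, minimality of $\rho_\infty^{(\alpha)}$ makes the shift on $(\mathcal O^{\mathbb N}, \mathbb P_{\rho_\infty^{(\alpha)}})$ ergodic, so Birkhoff's theorem applied to the indicator of any finite word $J$ forces its empirical frequency to converge almost surely to $\mathbb P_{\rho_\infty^{(\alpha)}}(J)$, and the same limit then holds $\mathbb P_{\rho^{(\alpha)}}$-a.s. by absolute continuity. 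Assumption \textbf{(ID)} provides, for every $\alpha \neq \beta$, a word $J$ on which these limits differ, yielding a measurable event separating $\mathbb P_{\rho^{(\alpha)}}$ from $\mathbb P_{\rho^{(\beta)}}$; consequently the $\mathbb P_{\rho^{(\alpha)}}$ are pairwise mutually singular.

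Given mutual singularity, pick disjoint events $A_\alpha$ with $\mathbb P_{\rho^{(\alpha)}}(A_\alpha) = 1$ (taking $A_\alpha = \emptyset$ when $p_\alpha = 0$) and define $\Upsilon := \alpha$ on $A_\alpha$; then $\mathbb P(\Upsilon = \alpha) = p_\alpha\, \mathbb P_{\rho^{(\alpha)}}(A_\alpha) = \operatorname{tr}(M_\alpha \rho)$. Lévy's theorem applied to the density martingale gives $X_n^{(\alpha)}/p_\alpha \to \mathbf 1_{A_\alpha}/p_\alpha$ almost surely, hence $X_\infty^{(\alpha)} = \mathbf 1_{\{\Upsilon = \alpha\}}$ and $\operatorname{tr}(M_\Upsilon \rho_n) \to 1$ almost surely. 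I expect the main obstacle to be the mutual singularity step: Assumption \textbf{(ID)} only distinguishes the \emph{invariant} states $\rho_\infty^{(\alpha)}$ and $\rho_\infty^{(\beta)}$, whereas one must compare the full path laws starting from arbitrary states inside the respective blocks. The bridge is the combination of ergodicity of the restricted channel and the law of large numbers for finite-word frequencies, extended from the invariant initial state to every initial state in the block via the absolute continuity argument above.
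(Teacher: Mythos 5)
Your argument is correct, and it is genuinely different from what the paper does: the paper states this proposition without proof (deferring to the reference it cites for the almost-sure selection result) and then re-derives the selection, with an exponential rate, purely via the Lyapunov function $W(\rho)=\frac12\sum_{\alpha\neq\beta}\sqrt{\operatorname{tr}(M_\alpha\rho)\operatorname{tr}(M_\beta\rho)}$, the uniform identifiability of Proposition \ref{IDmod}, and Cauchy--Schwarz (Lemma \ref{lem:rate}, Theorem \ref{thm:expselect} and its corollary); combining $W(\rho_n)\to 0$ a.s.\ with the martingale convergence of each $\operatorname{tr}(M_\alpha\rho_n)$ then produces $\Upsilon$ with the stated law. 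You instead take the classical measure-theoretic route: the decomposition $\mathbb P_\rho=\sum_\alpha p_\alpha\,\mathbb P_{\rho^{(\alpha)}}$, the identification of $\operatorname{tr}(M_\alpha\rho_n)/p_\alpha$ as the density martingale of $\mathbb P_{\rho^{(\alpha)}}$ relative to $\mathbb P_\rho$ on $\mathcal F_n$, and mutual singularity of the block measures obtained from \textbf{(ID)} via ergodicity of the shift under $\mathbb P_{\rho_\infty^{(\alpha)}}$ and Birkhoff's theorem, transferred to arbitrary $\rho^{(\alpha)}$ by the domination $\rho^{(\alpha)}\leq C\rho_\infty^{(\alpha)}$. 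All the individual steps check out (the commutation $M_\alpha V_I=V_IM_\alpha$ justifies both the mixture decomposition and the density identification, and the L\'evy/Radon--Nikodym endgame is sound). The one ingredient you assert rather than prove is that extremality of $\rho_\infty^{(\alpha)}$ in $\mathcal D_\Phi$ implies ergodicity of the outcome measure under the shift; this is a genuine theorem (of K\"ummerer--Maassen type, available in the ergodic-theory references the paper cites) and should be cited explicitly, since it is where all the work of converting minimality into a law of large numbers is hidden. What each approach buys: yours is self-contained at the level of the qualitative statement and makes transparent why $\mathbb P[\Upsilon=\alpha]=\operatorname{tr}(M_\alpha\rho)$ (it is literally the weight of the $\alpha$-component in the mixture of path laws), while the paper's Lyapunov route avoids ergodic theory entirely, quantifies the convergence exponentially, and is the version that survives the addition of feedback in Section \ref{sec:feedback}.
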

In other words, the quantum trajectory becomes asymptotically fully supported in the subspace $\H_\Upsilon$, randomly selected among the minimal invariant subspaces $\H_\alpha.$
%\textcolor{red}{C: on devrait faire un commentaire pour expliquer que cela veut dire que $\rhon$ va converger dans $\mathcal H_\Upsilon$ Peut etre l'ecrire $\sqrt{1-\tr{}}$ comme le cas du controle}\comment{N: ca on a dit juste avant Assumption 2.1} 
Assumption \textbf{(ID)} is natural and cannot be relaxed. Indeed consider that for some $\alpha\neq\beta$ we have, for all $I$, $\operatorname{tr}(V_I\rho_{\infty}^{(\alpha)}V_I^\dag )=\operatorname{tr} (V_I\rho_{\infty}^{(\beta)}V_I^\dag).$ If the initial state is a non-trivial convex combination $\rho=\lambda\rho_{\infty}^{(\alpha)}+\mu\rho_{\infty}^{(\beta)},$ then the quantum trajectory will never select one of the subspaces. More precisely, a straightforward computation yields $\tr{M_\alpha\rhon}=\lambda$ and $\tr{M_\beta\rhon}=\mu$ for all $n\in\mathbb N$ and Proposition \ref{prop:convergence} fails.

The rest of this section is devoted to studying assumption \textbf{(ID)}. In the sequel, we shall use the notation $\rho^{(\alpha)}$ to denote a state whose support is in $\mathcal H_\alpha$.  In quantum filtering theory, thinking in terms of identifiability refers to the identifiability of probability measure \cite{VanHandelobserv}. This is the case in our context. First, assumption \textbf{(ID)} can be rephrased by saying that for any $\alpha\neq\beta$
$$\mathbb P_{\rho_{\infty}^{(\alpha)}}\neq \mathbb P_{\rho_{\infty}^{(\beta)}}.$$ At this stage, one can wonder: if $\alpha\neq\beta,$ do we have
$$\mathbb P_{\rho^{(\alpha)}}\neq \mathbb P_{\rho^{(\beta)}}$$
for all $\rho^{(\alpha)}$ and $\rho^{(\beta)}?$ This is indeed true. By contradiction, if it was not the case, it would mean that there exist $\rho^{(\alpha)}$ and $\rho^{(\beta)}$ such that $\mathbb P_{\rho^{(\alpha)}}=\mathbb P_{\rho^{(\beta)}}$. Then for all $k$, for all $J\in\mathcal O^k$ we have
$$\tr{V_IV_J\rho^{(\alpha)}V_J^\dag V_I^\dag}=\tr{V_IV_J\rho^{(\beta)}V_J^\dag V_I^\dag},$$
where $I$ is the word in assumption \textbf{(ID)} distinguishing $\rho_{\infty}^{(\alpha)}$ and $\rho_{\infty}^{(\beta)}$. Then summing over $J$, we get
\begin{equation}\label{eq:equality}
    \sum_{J\in\mathcal O^k}\tr{V_IV_J\rho^{(\alpha)}V_J^\dag V_I^\dag}=\tr{V_I\Phi^k(\rho^{(\alpha)})V_I^\dag}=\tr{V_I\Phi^k(\rho^{(\beta)})V_I^\dag}.
\end{equation}
At this stage, it is tempting to take $k\to\infty$. However, $\Phi^k$ does not converge in general, and we shall use \cite[Theorem 3.3]{benoist2019invariant} reformulated here, which is a corollary of Evans and Hoegh-Krohn Perron-Frobenius Theorem for positive maps \cite{evans1977spectral}.

\begin{theorem}\label{thm:ergod}
	Assume there exists a unique $\Phi$-invariant element $\rho_{\infty} \in \mathcal D_\Phi$. Then there exist $m\in\mathbb N$, two positive constants $c$ and $\lambda<1$ such that, for any $\rho\in\mathcal D$, for any $n\in\mathbb N^*$, and for all $s\in \{ 0,\ldots,m-1 \},$
	\begin{equation*}\label{eq:ergod}
	\Big\|\frac1m\sum_{r=0}^{m-1} \Phi^{mn+s+r}(\rho)-\rho_{\infty}\Big\|_1\leq c\lambda^n.
	\end{equation*}
 The minimal integer $m$ is usually called the period of the quantum channel $\Phi$. The case $m=1$ is called aperiodic case, while the periodic case refers to $m>1$.
\end{theorem}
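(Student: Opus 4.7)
The plan is to prove the bound by spectral analysis of $\Phi$, relying on the Evans--Hoegh-Krohn Perron--Frobenius theorem as cited. Since $\Phi$ is completely positive and trace-preserving, its spectral radius equals $1$, and the peripheral spectrum (the eigenvalues of modulus $1$) consists of semisimple eigenvalues. Under the uniqueness of the invariant state $\rho_\infty \in \mathcal{D}_\Phi$, the Perron--Frobenius type result of Evans--Hoegh-Krohn asserts that the peripheral spectrum forms a finite cyclic group of $m$-th roots of unity for some minimal integer $m\geq 1$, each eigenvalue being simple; this $m$ is precisely the period. In particular the eigenvalue $1$ is simple, with spectral projector $P_0$ acting on $\mathcal D$ as $P_0(\rho) = \rho_\infty \,\mathrm{tr}(\rho) = \rho_\infty$.

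Next, I would write the Jordan-type decomposition
\begin{equation*}
\Phi = \sum_{k=0}^{m-1} e^{2\pi i k/m} P_k + N,
\end{equation*}
where $P_k$ are the (mutually orthogonal) spectral projectors onto the eigenspaces associated with $e^{2\pi i k/m}$, and $N$ is the part of $\Phi$ restricted to the complement of the peripheral spectrum. Because $P_k$ and $N$ commute and $P_kN=NP_k=0$, iteration gives, for any integer $t\geq 0$,
\begin{equation*}
\Phi^t = \sum_{k=0}^{m-1} e^{2\pi i k t /m} P_k + N^t.
\end{equation*}
By the spectral radius formula applied to $N$, whose spectrum sits strictly inside the open unit disk, there exist $c_0>0$ and $\lambda\in(0,1)$ such that $\|N^t\|_1 \leq c_0 \lambda^t$ for all $t\in\mathbb N$, where the norm is operator norm on trace-class.

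The key algebraic step is then the averaging. Evaluating $\Phi^{mn+s+r}$ for $t=mn+s+r$ and noting that $(e^{2\pi i k/m})^{mn} = 1$, we get
\begin{equation*}
\frac{1}{m}\sum_{r=0}^{m-1} \Phi^{mn+s+r} = \sum_{k=0}^{m-1} e^{2\pi i k s/m}\left(\frac{1}{m}\sum_{r=0}^{m-1} e^{2\pi i k r/m}\right) P_k + \frac{1}{m}\sum_{r=0}^{m-1} N^{mn+s+r}.
\end{equation*}
The inner geometric sum vanishes for $k\neq 0$ and equals $1$ for $k=0$, so only the $P_0$ term survives among the peripheral projectors, contributing exactly $\rho_\infty$ when applied to any $\rho\in\mathcal D$.

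Subtracting $\rho_\infty$ and bounding the residual $N$-terms in trace norm yields
\begin{equation*}
\left\| \frac{1}{m}\sum_{r=0}^{m-1} \Phi^{mn+s+r}(\rho) - \rho_\infty\right\|_1 \leq \frac{1}{m}\sum_{r=0}^{m-1} c_0 \lambda^{mn+s+r} \leq c_0 \lambda^{mn}\cdot \frac{1}{m}\sum_{r=0}^{m-1}\lambda^{s+r} \leq c\, (\lambda^m)^n,
\end{equation*}
which gives the claimed bound after renaming $\lambda^m$ as the new $\lambda$ and absorbing the prefactor into $c$ (the estimate is uniform in $\rho\in\mathcal D$ since $\mathcal D$ is bounded in trace norm). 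The main subtlety, and the reason the averaging appears in the statement, is the presence of the peripheral spectrum when $m>1$: without averaging, $\Phi^t(\rho)$ does not converge but only orbits periodically around $\rho_\infty$; the Cesàro sum over one full period is precisely what kills those oscillatory contributions and leaves only the exponentially decaying part coming from $N$.
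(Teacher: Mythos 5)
Your proof is correct, but note that the paper does not actually prove this statement: it is quoted as a reformulation of Theorem~3.3 of the cited reference of Benoist et al., itself presented as a corollary of the Evans--Hoegh-Krohn Perron--Frobenius theorem. Your spectral argument is the standard route behind that citation, and all the steps check out: semisimplicity of the peripheral spectrum (which in finite dimension follows from the uniform boundedness of $\|\Phi^t\|$ on the trace-norm unit ball), the identification $P_0(\rho)=\rho_\infty\operatorname{tr}(\rho)$ using that $\Phi^\dag(I)=I$ makes $I$ the left eigenvector at $1$, the Gelfand bound $\|N^t\|\leq c_0\lambda^t$, and the Ces\`aro averaging that annihilates the $P_k$ with $k\neq 0$ via the vanishing geometric sums. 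One point deserves a little more care: Evans--Hoegh-Krohn is stated for \emph{irreducible} positive maps, whereas the hypothesis here is only uniqueness of the invariant state, which in general allows a transient part. To invoke the cyclic-group structure of the peripheral spectrum you should add the (standard) observation that any peripheral eigenvector of $\Phi$ is supported in $\operatorname{supp}(\rho_\infty)$, on which the restriction of $\Phi$ is irreducible because $\rho_\infty$ is then a minimal invariant state; this reduces the peripheral spectral analysis to the irreducible case. In the paper's actual use of the theorem (applied to $\Phi$ restricted to each minimal enclosure $\mathcal H_\alpha$) irreducibility holds automatically, so this is a cosmetic rather than substantive gap. Note also that simplicity of the non-unit peripheral eigenvalues, which you assert, is not even needed for your argument: semisimplicity plus the root-of-unity structure suffices, since the averaging kills every $P_k$, $k\neq0$, regardless of its rank.
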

 In our setting where $\Phi$ may have multiple invariant elements, we shall use the following corollary.

\begin{corollary}
    There exist $m\in\mathbb N$, two positive constants $c$ and $\lambda<1$, such that for any $n\in\mathbb N^*$, and for all $s\in \{ 0,\ldots,m-1 \},$
	\begin{equation}\label{eq:ergodoto}
	\sup_\alpha\sup_{\rhoalpha}\Big\|\frac1m\sum_{r=0}^{m-1} \Phi^{mn+s+r}(\rhoalpha)-\rhoalphainfty\Big\|_1\leq c\lambda^n.
 \end{equation}
 In particular there exist two positive constants $\tilde c$ and $\tilde{\lambda}<1$ such that for all $k\in\mathbb N$\begin{equation}\label{eq:ergodoto2}
	\sup_\alpha\sup_{\rhoalpha}\Big\|\frac1m\sum_{r=0}^{m-1} \Phi^{k+r}(\rhoalpha)-\rhoalphainfty \Big\|_1\leq \tilde c\tilde\lambda^{k}.
 \end{equation}
\end{corollary}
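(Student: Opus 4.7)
The plan is to reduce to Theorem~\ref{thm:ergod} applied to each minimal invariant block separately, then combine the finitely many resulting estimates into a uniform one.

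First, I would exploit the block-diagonal structure. Since $V_iM_\alpha = M_\alpha V_i$ for every $i\in\mathcal O$ and every $\alpha$, the channel $\Phi$ restricts to a completely positive trace-preserving map $\Phi_\alpha$ on $\mathcal{B}(\mathcal H_\alpha)$ for each $\alpha\in\{1,\dots,\ell\}$. By the minimality of $\H_\alpha$, the restricted map $\Phi_\alpha$ admits $\rho_\infty^{(\alpha)}$ as its \emph{unique} invariant state in $\mathcal{D}\cap \mathcal{B}(\H_\alpha)$. Hence Theorem~\ref{thm:ergod} applies to each $\Phi_\alpha$ and yields a period $m_\alpha\in\mathbb N$ together with constants $c_\alpha>0$ and $\lambda_\alpha<1$ such that, uniformly in $s\in\{0,\dots,m_\alpha-1\}$ and in $\rho^{(\alpha)}$ supported in $\H_\alpha$,
\begin{equation*}
\Bigl\|\tfrac{1}{m_\alpha}\sum_{r=0}^{m_\alpha-1}\Phi^{m_\alpha n+s+r}(\rho^{(\alpha)})-\rho_\infty^{(\alpha)}\Bigr\|_1\leq c_\alpha\lambda_\alpha^n.
\end{equation*}

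Next, I would choose a common $m$, for instance the least common multiple $m:=\mathrm{lcm}(m_1,\dots,m_\ell)$, so that $m=k_\alpha m_\alpha$ for some integer $k_\alpha$. For this common $m$ and any $s\in\{0,\dots,m-1\}$, I would decompose $s=j_0 m_\alpha + s_0$ with $s_0\in\{0,\dots,m_\alpha-1\}$ and regroup the Ces\`aro sum of length $m$ into $k_\alpha$ consecutive Ces\`aro sums of length $m_\alpha$:
\begin{equation*}
\tfrac{1}{m}\sum_{r=0}^{m-1}\Phi^{mn+s+r}(\rho^{(\alpha)})
=\tfrac{1}{k_\alpha}\sum_{j=0}^{k_\alpha-1}\Bigl[\tfrac{1}{m_\alpha}\sum_{r=0}^{m_\alpha-1}\Phi^{m_\alpha(k_\alpha n+j+j_0)+s_0+r}(\rho^{(\alpha)})\Bigr].
\end{equation*}
Applying the per-block estimate to each inner sum, the triangle inequality gives a bound of the form $c_\alpha'\,\lambda_\alpha^{k_\alpha n}$. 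Taking the maximum of the resulting constants over the finitely many $\alpha$ produces a single $c>0$ and $\lambda<1$ for which \eqref{eq:ergodoto} holds; the supremum over $\rho^{(\alpha)}$ is automatic because Theorem~\ref{thm:ergod} is already uniform in the initial state.

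Finally, to deduce \eqref{eq:ergodoto2}, I would write an arbitrary $k\in\mathbb N$ as $k=mn+s$ with $s\in\{0,\dots,m-1\}$ and $n=\lfloor k/m\rfloor\geq (k-m+1)/m$. Applying \eqref{eq:ergodoto} and bounding $\lambda^n\leq \lambda^{-1+1/m}(\lambda^{1/m})^k$, one obtains \eqref{eq:ergodoto2} with $\tilde\lambda:=\lambda^{1/m}\in(0,1)$ and $\tilde c:=c\lambda^{-1+1/m}$.

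The only delicate step is the period-unification in the second paragraph; I expect the rest to amount to straightforward rearrangements and constant chasing.
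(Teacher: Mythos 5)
Your proof is correct and follows essentially the same route as the paper: apply Theorem~\ref{thm:ergod} to each block, take $m$ to be the lcm of the periods and the maxima of the constants $c_\alpha,\lambda_\alpha$, then use Euclidean division by $m$ to pass to \eqref{eq:ergodoto2}. Your regrouping of the length-$m$ Ces\`aro average into $k_\alpha$ consecutive length-$m_\alpha$ averages merely makes explicit a step the paper leaves implicit.
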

\begin{proof}
One can apply Theorem \ref{thm:ergod} on each $\mathcal H_\alpha$ since $\Phi_{\vert\mathcal H_\alpha}$ has a unique invariant state. One can then associate a period $m_\alpha$ to each $\mathcal H_\alpha$ as well as constants $c_\alpha$ and $\lambda_\alpha$. Then define 
$$m=lcm\{m_\alpha,\alpha=1,\ldots,\ell\},\quad c=\max_{\alpha}{c_\alpha},\quad \lambda=\max_{\alpha}{\lambda_\alpha},$$
where $lcm$ stands for lowest common multiple. Inequality \eqref{eq:ergodoto} is then satisfied with the above constants. For inequality \eqref{eq:ergodoto2}, for $k\in\mathbb N^*,$ take the Euclidiean division by $m,$ then there exist $q=\lfloor k/m\rfloor$ and $s\in\{ 0,\ldots,m-1 \}$ such that $k=qm+s$. Then
\begin{eqnarray*}
    \sup_\alpha\sup_{\rhoalpha}\Big\|\frac1m\sum_{r=0}^{m-1} \Phi^{k+r}(\rhoalpha)-\rhoalphainfty\Big\|_1
    &=&\sup_\alpha\sup_{\rhoalpha}\Big\|\frac1m\sum_{r=0}^{m-1} \Phi^{qm+s+r}(\rhoalpha)-\rhoalphainfty\Big\|_1\\
    &\leq& c\lambda^q\leq \tilde{c}\tilde\lambda^k,
\end{eqnarray*}
with $\tilde{c}=c/\lambda$ and $\tilde\lambda=\lambda^{1/m}.$
\end{proof}

Coming back to \eqref{eq:equality}, we have for all $k$
\begin{equation*}
    \tr{V_I\frac1m\sum_{r=0}^{m-1}\Phi^{mk+r}(\rho^{(\alpha)})V_I^\dag}=\tr{V_I\frac1m\sum_{r=0}^{m-1}\Phi^{mk+r}(\rho^{(\beta)})V_I^\dag}
\end{equation*}
which yields when $k\to\infty$
$$\tr{V_I \rho_{\infty}^{(\alpha)} V_I^\dag} = \tr{V_I \rho_{\infty}^{(\beta)} V_I^\dag},$$
contradicting \textbf{(ID)}. Then $\mathbb P_{\rhoalpha}\neq\mathbb P_{\rhobeta}$ for all $\rhoalpha$ and all $\rhobeta$. Therefore, assumption \textbf{(ID)}, which expresses the identifiability of each block with respect to the invariant states, implies actually a stronger identifiability condition. 

In the sequel, we shall reinforce this result by discussing the size of words that implies identifiability. Roughly speaking, when establishing exponential convergence, we shall need that the identifiability condition is obtained for a fixed size of words. The precise statement is expressed in the next proposition. %For a pedagogical purpose, we shall distinguish the periodic and aperiodic case.  
\begin{proposition} \label{IDmod}
    %In the aperiodic case $m=1$, 
    There exists an integer N such that for all $\alpha\neq\beta$, for all $\rhoalpha$, $\rhobeta$  and for all $n\geqslant N$,  there exists a word $I_n\in\mathcal O^n$ such that
    $$\tr{V_{I_n} \rhoalpha V_{I_n}^\dag} \neq \tr{V_{I_n} \rhobeta V_{I_n}^\dag}.$$

   %In the periodic case $m>1$, there exists an integer N such that for all $\alpha\neq\beta$ and for all $\rhoalpha$, $\rhobeta$ and for all $n\geqslant N$,  there exists a family of words $I_{n+r}\in\mathcal O^{n+r},\,r=0,\ldots,m-1,$ such that
    %$$\frac{1}{m}\sum_{r=0}^{m-1}\tr{V_{I_{n+r}} \rhoalpha V_{I_{n+r}}^\dag} \neq \frac{1}{m}\sum_{r=0}^{m-1}\tr{V_{I_{n+r}} \rhobeta V_{I_{n+r}}^\dag}.$$
\end{proposition}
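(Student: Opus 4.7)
The plan is to combine a compactness argument with an extension trick, leveraging the stronger identifiability $\mathbb P_{\rhoalpha}\ne\mathbb P_{\rhobeta}$ (for any two states supported in distinct minimal subspaces) that has just been derived from \textbf{(ID)} and Theorem \ref{thm:ergod} in the paragraphs preceding the proposition.

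First I would reduce the claim to producing a uniform integer $N_0$ such that, for every $\alpha\ne\beta$ and every pair $(\rhoalpha,\rhobeta)$, there exists a word $I$ with $|I|\le N_0$ distinguishing the two probabilities $\tr{V_I\rhoalpha V_I^\dag}$ and $\tr{V_I\rhobeta V_I^\dag}$. The upgrade from "some length $\le N_0$" to "every length $n\ge N_0$" is immediate from trace preservation of $\Phi$: if $|I|=k\le N_0\le n$, then
\[
\sum_{J\in\mathcal O^{n-k}}\tr{V_JV_I\rhoalpha V_I^\dag V_J^\dag}=\tr{\Phi^{n-k}(V_I\rhoalpha V_I^\dag)}=\tr{V_I\rhoalpha V_I^\dag},
\]
and similarly for $\rhobeta$. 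Since the two sums differ, at least one summand must differ, yielding a distinguishing $I_n\in\mathcal O^n$ of length exactly $n$.

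To produce $N_0$, I would fix $\alpha\ne\beta$ and introduce, on the compact set $\mathcal K_{\alpha,\beta}$ of pairs of density matrices supported respectively in $\H_\alpha$ and $\H_\beta$, the continuous function
\[
g_N(\rhoalpha,\rhobeta)=\max_{1\le k\le N}\max_{I\in\mathcal O^k}\bigl|\tr{V_I\rhoalpha V_I^\dag}-\tr{V_I\rhobeta V_I^\dag}\bigr|.
\]
The sequence $(g_N)_N$ is nondecreasing in $N$, and by the reinforced identifiability recalled above its pointwise supremum is strictly positive on all of $\mathcal K_{\alpha,\beta}$. Hence $\{g_N>0\}$ is a nested open cover of this compact set, and extracting a finite (here single) subcover produces an $N_{\alpha,\beta}$ with $g_{N_{\alpha,\beta}}>0$ everywhere on $\mathcal K_{\alpha,\beta}$. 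Setting $N_0=\max_{\alpha\ne\beta}N_{\alpha,\beta}$, a maximum over finitely many pairs of labels, completes the construction.

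The only real obstacle is upgrading pointwise identifiability (a statement made state by state) into a \emph{uniform} bound on the word length; this is exactly what compactness of the state space combined with continuity of $g_N$ delivers. Everything else is either the recalled identifiability theorem or the elementary extension argument above, so no further machinery is needed.
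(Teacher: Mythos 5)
Your proof is correct, but it reaches uniformity by a genuinely different mechanism than the paper. The paper's proof is quantitative: it fixes the word $I$ from \textbf{(ID)} for the pair $(\alpha,\beta)$, sets $e_{\alpha,\beta}=\vert\tr{V_I\rhoalphainfty V_I^\dag}-\tr{V_I\rhobetainfty V_I^\dag}\vert>0$, and uses the fact that the constants $c,\lambda$ in Theorem \ref{thm:ergod} are \emph{uniform over initial states} to choose a single $k$ with $\sup_{\rhoalpha}\vert\tr{V_I\Phi^k(\rhoalpha)V_I^\dag}-\tr{V_I\rhoalphainfty V_I^\dag}\vert\leq e_{\alpha,\beta}/4$ (and likewise for $\beta$); a triangle inequality then forces some word $J$ of length $k$ to distinguish, giving $N_{\alpha,\beta}=k+\vert I\vert$ explicitly, with a separate Ces\`aro-averaged variant for the periodic case $m>1$. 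You instead take the pointwise identifiability $\mathbb P_{\rhoalpha}\neq\mathbb P_{\rhobeta}$ (established in the paper just before the proposition) as a black box and uniformize it by covering the compact set of pairs with the nested open sets $\{g_N>0\}$. This is softer and shorter: it is insensitive to the periodicity of $\Phi$, so you need no case split on $m$, and the continuity and compactness facts you invoke are all routine in finite dimension. What you give up is any handle on the size of $N$ — the paper's route yields $N_{\alpha,\beta}$ in terms of $c$, $\lambda$, $e_{\alpha,\beta}$ and $\vert I\vert$, which is the kind of information one would want if the constant $\kappa$ of Lemma \ref{lem:rate}, and hence the convergence rate $\gamma$, were ever to be made explicit. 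Your extension step (padding a distinguishing word of length $k\leq N_0$ to exact length $n$ by trace preservation) is the same consistency argument that opens the paper's proof, so the reduction to bounded word length is identical in both treatments.
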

\begin{proof}  
    Before we start, let us remark that if $I$ of size $|I|=n$ is such that $\operatorname{tr}(V_I\rho^{(\alpha)}V_I^\dag)\neq \operatorname{tr}(V_I\rho^{(\beta)}V_I^\dag)$, then for any $n'\geq n$, there exists $I'$ of size $|I'|=n'$ such that $\operatorname{tr}(V_{I'}\rho^{(\alpha)}V_{I'}^\dag)\neq \operatorname{tr}(V_{I'}\rho^{(\beta)}V_{I'}^\dag)$. Indeed, consistency implies,
    $$\sum_{J\in \mathcal O^{n'-n}}\tr{V_JV_I\rho^{(\alpha)}V_I^\dag V_J^\dag}=\tr{V_I\rho^{(\alpha)}V_I^\dag}$$
    and similarly with $\alpha$ replaced by $\beta$. If for all $J$, $\operatorname{tr}(V_JV_I\rho^{(\alpha)}V_I^\dag V_J^\dag)=\operatorname{tr}(V_JV_I\rho^{(\beta)}V_I^\dag V_J^\dag)$, then we have $\operatorname{tr}(V_I\rho^{(\alpha)}V_I^\dag )=\operatorname{tr}(V_I\rho^{(\beta)}V_I^\dag)$. Hence, there must exist $I'$, the concatenation of $I$ with a $J,$ such that $\operatorname{tr}(V_{I'}\rho^{(\alpha)}V_{I'}^\dag)\neq \operatorname{tr}(V_{I'}\rho^{(\beta)}V_{I'}^\dag)$.
    
    It is thus sufficient to prove that there exists an integer $N$ such that for any $\alpha\neq\beta$, $\rho^{(\alpha)}$ and $\rho^{(\beta)}$, there exists $I'\in \cup_{n=1}^N\mathcal O^n$ such that
    $$\tr{V_{I'}\rho^{(\alpha)}V_{I'}^\dag}\neq \tr{V_{I'}\rho^{(\beta)}V_{I'}^\dag}.$$
    
    We start with the case $m=1$, the proof for the case $m>1$ will be sketched at the end. Let $\alpha\neq\beta$ and let the word $I$ corresponding to assumption \textbf{(ID)} for the couple $(\alpha,\beta)$. We denote
    $$e_{\alpha,\beta}=\left\vert \tr{V_I \rhoalphainfty V_I^\dag} - \tr{V_I \rhobetainfty V_I^\dag}\right\vert>0.$$
    Using Theorem \ref{thm:ergod}, we have for all $k\in\mathbb N$
    \begin{eqnarray*}
    &&\sup_\alpha\sup_{\rhoalpha}\left\vert\tr{V_I\Phi^k(\rhoalpha)V_I^\dag}-\tr{V_I\rhoalphainfty V_I^\dag}\right\vert\\
    &=&\sup_\alpha\sup_{\rhoalpha}\left\vert\tr{ V_I\left(\Phi^k(\rhoalpha)-\rhoalphainfty \right) V_I^\dag }\right\vert\\
    &\leq&\sup_\alpha\sup_{\rhoalpha}\left\vert\tr{\left\|\Phi^k(\rhoalpha)-\rhoalphainfty\right\|_1 V_I^\dag V_I}\right\vert\\
    &\leq&\sup_\alpha\sup_{\rhoalpha}\Big\| \Phi^k(\rhoalpha)-\rhoalphainfty \Big\|_1\\&\leq&c\lambda^k.
    \end{eqnarray*}
    Here we have successively used that $\vert\tr{AB^\dag B}\vert\leq \tr{\vert A\vert B^\dag B}$ and next $V_I^\dag V_I\leq I_d$ (for operators, $A\leq B$ means $B-A\geq 0$). Therefore there exists $k\in\mathbb N^*$ such that
    \begin{eqnarray*}
       \sup_{\rhoalpha}\left\vert\tr{V_I\Phi^k(\rhoalpha)V_I^\dag}-\tr{V_I\rhoalphainfty  V_I^\dag}\right\vert &\leq& \frac{e_{\alpha,\beta}}{4}\\
       \sup_{\rhobeta}\left\vert\tr{V_I\Phi^k(\rhobeta)V_I^\dag}-\tr{V_I\rhobetainfty   V_I^\dag}\right\vert &\leq& \frac{e_{\alpha,\beta}}{4}.
    \end{eqnarray*}
    This way for all $\rhoalpha$ and all $\rhobeta$
    \begin{eqnarray*}
        &&\left\vert\tr{V_I\Phi^k(\rhoalpha)V_I^\dag}-\tr{V_I\Phi^k(\rhobeta)V_I^\dag}\right\vert\\
        &=&\left\vert\tr{V_I\Phi^k(\rhoalpha)V_I^\dag}-\tr{V_I\rhoalphainfty  V_I^\dag}+\tr{V_I\rhoalphainfty  V_I^\dag}-\tr{V_I\rhobetainfty   V_I^\dag}\right.\\&&\left.+\tr{V_I\rhobetainfty   V_I^\dag}-\tr{V_I\Phi^k(\rhobeta)V_I^\dag}\right\vert\\
        &\geq&-\left\vert\tr{V_I\Phi^k(\rhoalpha)V_I^\dag}-\tr{V_I\rhoalphainfty  V_I^\dag}\right\vert+\left\vert\tr{V_I\rhoalphainfty  V_I^\dag}-\tr{V_I\rhobetainfty   V_I^\dag}\right\vert\\&&-\left\vert\tr{V_I\rhobetainfty   V_I^\dag}-\tr{V_I\Phi^k(\rhobeta)V_I^\dag}\right\vert\\
        &\geq&\frac{e_{\alpha,\beta}}{2}.
    \end{eqnarray*}
    As a consequence, there exists a word $J$ (depending on $\rhoalpha$ and $\rhobeta$) of size $k$ such that
    $$\left\vert\tr{V_IV_J\rhoalpha V_J^\dag V_I^\dag}-\tr{V_IV_J\rhobeta V_J^\dag V_I^\dag}\right\vert>0.$$
    Indeed, if it was false, summing over $J$ would contradict the last inequality. Note that the important fact is that the size $k$ does not depend on $\rhoalpha$ and $\rhobeta$. 
    
    Then for the couple $(\alpha,\beta)$ there exists $N_{\alpha,\beta}=k+\vert I\vert$ such that for all $\rhoalpha$ and all $\rhobeta,$ there exists a word $I'$ of size $N_{\alpha,\beta}$ such that
    $$\left\vert\tr{V_{I'}\rhoalpha V_{I'}^\dag}-\tr{V_{I'}\rhobeta V_{I'}^\dag }\right\vert>0.$$
    Taking $N=\max_{\alpha\neq\beta}\{N_{\alpha,\beta}\}$ yields the result. 
    
    The proof for $m>1$ is similar replacing $\Phi^k$ by the average \eqref{eq:ergodoto2}. Then there exists $k$ such that 
    for all $\rhoalpha$ and $\rhobeta$
    $$\left\vert\tr{V_I\frac{1}{m}\sum_{r=0}^{m-1}\Phi^{k+r}(\rhoalpha)V_I^\dag}-\tr{V_I\frac{1}{m}\sum_{r=0}^{m-1}\Phi^{k+r}(\rhobeta)V_I^\dag}\right\vert>\frac{e_{\alpha,\beta}}{2}.$$ This implies that there exists a $J\in\cup_{r=0}^{m-1}\mathcal O^{k+r}$ such that
   $$\left\vert\tr{V_IV_{J}(\rhoalpha)V_{J}^\dag V_I^\dag}-\tr{V_IV_{J}(\rhobeta)V_{J}^\dag V_I^\dag}\right\vert>0.$$
   Then concatenating $I$ and $J$ yields the desired result with $N_{\alpha,\beta}=k+m-1+|I|$ and $N=\max_{\alpha\neq\beta}\{N_{\alpha,\beta}\}.$
\end{proof}

\section{Exponential selection of minimal  invariant subspaces}\label{sec:openloop}
This section is devoted to showing that the selection of minimal invariant subspace is exponentially fast in mean. This definitively improves the result of the literature where, as far as we know, only almost sure considerations have been developed \cite{MR4318593,BePelLin}. To this end, we use a Lyapunov function defined as
\begin{equation}
W(\rho)=\frac{1}{2}\sum_{\alpha\neq \beta}\sqrt{\tr{M_\alpha \rho} \tr{M_\beta \rho},}
\label{Lyap}
\end{equation}
for all $\rho\in \mathcal D$. Since $\sum_\alpha\tr{M_\alpha \rho}= 1$ and 
$ab \leq (a^2+b^2)/2$, then, for all $\rho\in\mathcal D,$
$$W(\rho)\leq \frac{\ell-1}{2}.$$
In the sequel, for $\alpha\neq\beta$, we define 
$$\mathcal A_{\alpha,\beta}=\{\rho\in\mathcal D \mid \tr{M_\alpha\rho}\neq0,~\tr{M_\beta\rho}\neq0\}.$$
 For any $\rho\in\mathcal A_{\alpha,\beta}$, denote $\tilde{\rho}^{(\alpha)}=\frac{M_\alpha\rho M_\alpha}{\tr{M_\alpha\rho}}$ and $\tilde{\rho}^{(\beta)}=\frac{M_\beta\rho M_\beta}{\tr{M_\beta\rho}}$.\\
 
\noindent The following lemma will be essential to prove the exponential convergence.

\begin{lemma} Let $N$ be the integer in Proposition \ref{IDmod}. We have
    $$\kappa=\sup_{\alpha\neq\beta}\sup_{\rho\in\mathcal A_{\alpha,\beta}}\sum_{I\in\mathcal O^N} \sqrt{ \tr{V_I \tilde{\rho}^{(\alpha)} V_I^\dag} \tr{V_I \tilde{\rho}^{(\beta)} V_I^\dag }}<1.$$

    %Periodic case. Let $N$ be the integer in Proposition \ref{IDmod} corresponding to the periodic case. We have
    %$$\kappa=\sup_{\alpha\neq\beta}\sup_{\rho\in\mathcal A_{\alpha,\beta}}\frac1m\sum_{r=0}^{m-1}\sum_{I\in\mathcal O^{N+r}} \sqrt{ \tr{V_I \tilde{\rho}^{(\alpha)} V_I^\dag} \tr{V_I \tilde{\rho}^{(\beta)} V_I^\dag }}<1.$$
    \label{lem:rate}
\end{lemma}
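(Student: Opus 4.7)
The plan is to prove strict inequality pointwise via Cauchy--Schwarz (or AM--GM), then upgrade to a uniform bound by a compactness argument.

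First, I would note that the supremum over $\rho \in \mathcal A_{\alpha,\beta}$ may be replaced by a supremum over pairs $(\sigma,\tau)$, where $\sigma$ is a density matrix with $\supp \sigma \subset \H_\alpha$ and $\tau$ is a density matrix with $\supp \tau \subset \H_\beta$, since the expression only depends on $\rho$ through $\tilde\rho^{(\alpha)}$ and $\tilde\rho^{(\beta)}$. So it suffices to bound
$$F_{\alpha,\beta}(\sigma,\tau) = \sum_{I\in\mathcal O^N}\sqrt{\tr{V_I\sigma V_I^\dag}\tr{V_I\tau V_I^\dag}}$$
uniformly over such $(\sigma,\tau)$.

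Next, applying $\sqrt{ab}\leq (a+b)/2$ term by term and summing, I get
$$F_{\alpha,\beta}(\sigma,\tau) \leq \frac{1}{2}\sum_{I\in\mathcal O^N}\bigl(\tr{V_I\sigma V_I^\dag}+\tr{V_I\tau V_I^\dag}\bigr) = \frac{1}{2}\bigl(\tr{\Phi^N(\sigma)}+\tr{\Phi^N(\tau)}\bigr) = 1,$$
using trace preservation. Crucially, $\sqrt{ab}<(a+b)/2$ strictly unless $a=b$, so equality in the summed inequality would force $\tr{V_I\sigma V_I^\dag}=\tr{V_I\tau V_I^\dag}$ for \emph{every} $I\in\mathcal O^N$. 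But since $\alpha\neq\beta$, Proposition~\ref{IDmod} guarantees the existence of at least one $I_N\in\mathcal O^N$ for which these traces differ. Hence $F_{\alpha,\beta}(\sigma,\tau)<1$ for every admissible pair $(\sigma,\tau)$.

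The final and main step is to turn this pointwise strict inequality into the uniform bound $\kappa<1$. The set of density matrices supported on $\H_\alpha$ is a closed, hence compact, subset of $\mathcal D$, and similarly for $\H_\beta$; and $F_{\alpha,\beta}$ is a continuous function of $(\sigma,\tau)$ on this compact product (each summand being a composition of continuous maps, and the sum being finite since $\mathcal O$ is finite and $N$ fixed). Therefore $F_{\alpha,\beta}$ attains its supremum, and that supremum is strictly less than $1$. Finally, since there are only finitely many pairs $(\alpha,\beta)$ with $\alpha\neq\beta$, taking the maximum yields $\kappa<1$, as claimed. The main obstacle is really making the compactness step clean: it requires parametrizing the supremum in terms of $(\sigma,\tau)$ rather than $\rho\in\mathcal A_{\alpha,\beta}$ (which is open in $\mathcal D$, so not compact), and observing that this substitution is allowed because $F$ depends on $\rho$ only through the conditioned states.
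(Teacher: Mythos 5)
Your proof is correct and follows essentially the same route as the paper's: a termwise elementary inequality bounds the sum by $1$, the equality case combined with Proposition~\ref{IDmod} yields strict inequality pointwise, and continuity plus compactness plus the finiteness of the set of pairs $(\alpha,\beta)$ upgrades this to $\kappa<1$. The only cosmetic differences are that you use AM--GM where the paper uses Cauchy--Schwarz (your equality case is marginally cleaner, avoiding the normalization of the proportionality constant $\lambda$), and you are more explicit than the paper in reparametrizing the supremum over the compact product of the state spaces of $\H_\alpha$ and $\H_\beta$, which is the right way to make the compactness step airtight.
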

\begin{proof}
Let $\rho\in\mathcal A_{\alpha,\beta}$. By Cauchy-Schwarz inequality, we get 
\begin{eqnarray*}
\sum_{I\in\mathcal O^N} \sqrt{ \tr{V_I \tilde{\rho}^{(\alpha)} V_I^\dag} \tr{V_I \tilde{\rho}^{(\beta)} V_I^\dag }} &\leq& \sqrt{ \sum_{I\in\mathcal O^N} \tr{V_I \tilde{\rho}^{(\alpha)} V_I^\dag}}  \sqrt{ \sum_{I\in\mathcal O^N} \tr{V_I \tilde{\rho}^{(\beta)} V_I^\dag}}\nonumber\\&=&\sqrt{ \tr{\Phi^N(\tilde{\rho}^{(\alpha)})}}  \sqrt{ \tr{\Phi^N(\tilde{\rho}^{(\beta)})}}\nonumber\\&=&1.
\end{eqnarray*}
By exploiting the equality case in Cauchy-Schwarz inequality, we have $$\sum_{I\in\mathcal O^N} \sqrt{ \tr{V_I \tilde{\rho}^{(\alpha)} V_I^\dag} \tr{V_I \tilde{\rho}^{(\beta)} V_I^\dag }}=1$$ if and only if there exists $\lambda \in \mathbb R$ such that
$$\forall I\in\mathcal O^N ,~~ \tr{V_I \tilde{\rho}^{(\alpha)} V_I^\dag} = \lambda \tr{ V_I \tilde{\rho}^{(\beta)} V_I^\dag}.$$
Summing over $I \in \mathcal O^N$, we see that necessarily $\lambda=1$, i.e.
$\forall I \in \mathcal O^N ,~~\operatorname{tr}(V_I \tilde{\rho}^{(\alpha)} V_I^\dag) = \operatorname{tr}( V_I \tilde{\rho}^{(\beta)} V_I^\dag).$
This is in contradiction with Proposition \ref{IDmod}. We can then conclude that for all $\rho\in\mathcal A_{\alpha,\beta}$
$$\sum_{I\in\mathcal O^N} \sqrt{ \tr{V_I \tilde{\rho}^{(\alpha)} V_I^\dag} \tr{V_I \tilde{\rho}^{(\beta)} V_I^\dag }}<1.$$
Since 
$(\tilde{\rho}^{(\alpha)},\tilde{\rho}^{(\beta)})\mapsto \sum_{I\in\mathcal O^N} \sqrt{ \operatorname{tr}(V_I \tilde{\rho}^{(\alpha)} V_I^\dag) \operatorname{tr}(V_I \tilde{\rho}^{(\beta)} V_I^\dag )}$
is a continuous function on a compact space, it reaches its supremum, hence, $\kappa<1$.
\end{proof}

Before expressing the main result of this section, let us present a lemma that will be used in the proof of the main result.
\begin{lemma}\label{lem:martin1}
    The process $(W(\rho_n))$ satisfies, for all $n\in\mathbb N,$
    $$\mathbb E[W(\rho_{n+1})\vert\rho_n]\leq W(\rhon).$$
\end{lemma}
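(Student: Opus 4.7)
The plan is to express the conditional expectation directly in terms of the block quantities $\tr{M_\alpha \rho_n}$ and the ``conditional'' normalized pieces $\tilde\rho_n^{(\alpha)}$, and then to recognize the resulting inner sum as the left-hand side appearing in Cauchy--Schwarz with sum equal to $1$ (taking the length-$1$ analog of the bound in Lemma \ref{lem:rate}). Throughout, the crucial algebraic input is the block-diagonal structure: $M_\alpha V_i = V_i M_\alpha$ for every $i \in \mathcal{O}$ and every $\alpha$, together with $M_\alpha^2 = M_\alpha$.

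First, I would establish the key identity
\begin{equation*}
\tr{M_\alpha V_i \rho_n V_i^\dag} = \tr{M_\alpha \rho_n}\,\tr{V_i \tilde\rho_n^{(\alpha)} V_i^\dag},
\end{equation*}
valid whenever $\tr{M_\alpha \rho_n} \neq 0$. This follows by pushing $M_\alpha$ through $V_i$ and $V_i^\dag$ via commutation, using $M_\alpha^2 = M_\alpha$, and cyclicity of the trace to obtain $\tr{M_\alpha V_i \rho_n V_i^\dag} = \tr{V_i M_\alpha \rho_n M_\alpha V_i^\dag}$, and then recalling the definition $\tilde\rho_n^{(\alpha)} = M_\alpha \rho_n M_\alpha/\tr{M_\alpha\rho_n}$. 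When $\tr{M_\alpha\rho_n}=0$, both sides vanish (and one also gets $\tr{M_\alpha \rho_{n+1}}=0$ almost surely, so the corresponding terms in $W(\rho_{n+1})$ disappear and can be ignored).

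Second, I would plug this into the conditional expectation. Writing $p_i = \tr{V_i \rho_n V_i^\dag}$ for the measurement probabilities, for any pair with $\rho_n \in \mathcal A_{\alpha,\beta}$,
\begin{equation*}
\sqrt{\tr{M_\alpha \rho_{n+1}}\tr{M_\beta \rho_{n+1}}}
= \frac{1}{p_{i_n}}\sqrt{\tr{M_\alpha\rho_n}\tr{M_\beta\rho_n}}\sqrt{\tr{V_{i_n}\tilde\rho_n^{(\alpha)} V_{i_n}^\dag}\tr{V_{i_n}\tilde\rho_n^{(\beta)} V_{i_n}^\dag}}.
\end{equation*}
Summing over $i \in \mathcal O$ weighted by $p_i$, the $1/p_i$ factor cancels, and I obtain
\begin{equation*}
\mathbb{E}[W(\rho_{n+1})\mid \rho_n] = \frac{1}{2}\sum_{\alpha\neq \beta}\sqrt{\tr{M_\alpha\rho_n}\tr{M_\beta\rho_n}}\sum_{i\in\mathcal O}\sqrt{\tr{V_i \tilde\rho_n^{(\alpha)}V_i^\dag}\tr{V_i\tilde\rho_n^{(\beta)}V_i^\dag}}.
\end{equation*}

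Third, I would bound the inner sum by $1$ via Cauchy--Schwarz, exactly as in the first line of the proof of Lemma \ref{lem:rate} but with $N$ replaced by $1$:
\begin{equation*}
\sum_{i\in\mathcal O}\sqrt{\tr{V_i\tilde\rho_n^{(\alpha)}V_i^\dag}\tr{V_i\tilde\rho_n^{(\beta)}V_i^\dag}} \leq \sqrt{\tr{\Phi(\tilde\rho_n^{(\alpha)})}}\sqrt{\tr{\Phi(\tilde\rho_n^{(\beta)})}}=1,
\end{equation*}
since $\Phi$ is trace-preserving. Combining the two displays gives $\mathbb{E}[W(\rho_{n+1})\mid \rho_n] \leq W(\rho_n)$, as required. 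The only step requiring care is the identity in the first paragraph; once the commutation and idempotence of $M_\alpha$ are used to turn $M_\alpha V_i \rho_n V_i^\dag$ into $V_i M_\alpha \rho_n M_\alpha V_i^\dag$, the rest is a direct computation and a one-line Cauchy--Schwarz, so no real obstacle remains.
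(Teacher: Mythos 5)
Your proof is correct and follows essentially the same route as the paper's: both rewrite $\mathbb E[W(\rho_{n+1})\mid\rho_n]$ using the degree-one homogeneity of $W$ and the commutation $M_\alpha V_i = V_i M_\alpha$ to factor out $\sqrt{\tr{M_\alpha\rho_n}\tr{M_\beta\rho_n}}$, then bound the remaining sum over letters by $1$ via Cauchy--Schwarz and trace preservation of $\Phi$. You merely make explicit the identity $\tr{M_\alpha V_i\rho_n V_i^\dag}=\tr{M_\alpha\rho_n}\tr{V_i\tilde\rho_n^{(\alpha)}V_i^\dag}$ and the vanishing of the degenerate terms, which the paper handles with an indicator function.
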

\begin{proof}
    Let $k\in\mathbb N$, we have
   \begin{align*}
  \EE{W(\rho_{k+1})}{\rhok}&=\sum_{i\in\mathcal O} W \o \frac{V_i \rhok V_i^\dag}{\tr{V_i \rhok V_i^\dag}} \c \tr{V_i \rhok V_i^\dag}\\
  &=\sum_{i\in\mathcal O} W \o V_i \rhok V_i^\dag \c \\
  &= \frac{1}{2} \sum_{i\in\mathcal O} \sum_{\alpha \neq \beta} \sqrt{\tr{M_\alpha V_i \rhok V_i^\dag} \tr{M_\beta V_i \rhok V_i^\dag}} \\
  &=\frac{1}{2} \sum_{\alpha \neq \beta} \sum_{i\in\mathcal O}  \mathds 1_{\rhok\in\mathcal A_{\alpha,\beta}}\sqrt{\frac{\tr{ M_\alpha V_i \rhok V_i^\dag }}{\tr{M_\alpha \rhok}} \frac{\tr{M_\beta V_i \rhok V_i^\dag}}{\tr{M_\beta \rhok}}} \sqrt{\tr{M_\alpha \rhok} \tr{M_\beta \rhok} } \\
  &\leqslant  W(\rhok).
\end{align*}
The last inequality is obtained by using Cauchy-Schwarz inequality as in the previous lemma. 
%In this context, it is not possible to have a constant strictly less than one. This would be the case if the condition of identifiability would involve a single letter. However, it suffices to establish the suppermartingale property for the desired proof.
\end{proof}

%\comment{In the sequel, we shall say that $(W(\rho_n))$ is a suppermartingale in the open-loop sense. Open-loop means that no exterior control has been done and only measurements affect the evolution of the system.}

Exponential convergence could be derived directly if an inequality of the following type
\begin{equation}\label{eq:unpas}
   \mathbb  E[W(\rho_{n+1})]\leq \kappa\mathbb E[W(\rho_{n})],~ n\in\mathbb N
\end{equation}
would be satisfied with a constant $\kappa<1$. Indeed this would imply that for all $n\in\mathbb N$
$$\mathbb E[W(\rho_{n})]\leq \kappa^n\frac{(\ell-1)}{2}.$$
As we shall see, obtaining such inequality in our context would ask for an identification with one letter, i.e. for all $\alpha\neq\beta$ and for all $\rhoalpha$ and all $\rhobeta,$ there exists $i\in\mathcal O$ such that $\operatorname{tr}(V_i\rhoalpha V_i^\dag)\neq \operatorname{tr}(V_i\rhobeta V_i^\dag)$. This is a strong assumption that is in general not satisfied. Note that this is however the case in the QND models. Nevertheless, we can still obtain exponential convergence, expressed in the following theorem, which is the main result of this section.

\begin{theorem}\label{thm:expselect}Consider the quantum trajectory $(\rho_n)$ defined in $\eqref{traj}$ and the Lyapunov function $W(\rho)$ defined in $\eqref{Lyap}$. Suppose that $\textbf{(ID)}$ holds. Then there exist some constants $C>0$ and $\gamma>0$ such that for all $ n \geqslant0, $ 
    $$\E{W(\rho_n)} \leqslant C e^{-\gamma n  }.$$
\end{theorem}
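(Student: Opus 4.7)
The plan is to replace the unavailable one-step contraction by an $N$-step contraction, where $N$ is the integer supplied by Proposition \ref{IDmod}, and then interpolate using the supermartingale property of Lemma \ref{lem:martin1}. More precisely, I aim to prove the key bound
\begin{equation*}
\mathbb E[W(\rho_{n+N})\mid\rho_n]\leq \kappa\, W(\rho_n),
\end{equation*}
with $\kappa<1$ the constant furnished by Lemma \ref{lem:rate}. Once this is in hand, iterating gives $\mathbb E[W(\rho_{kN})]\leq\kappa^k\, (\ell-1)/2$ for all $k\in\mathbb N$, and the supermartingale property handles the intermediate times $n=kN+r$ with $r\in\{0,\ldots,N-1\}$, yielding the desired bound $\mathbb E[W(\rho_n)]\leq C e^{-\gamma n}$ with $\gamma = -\ln(\kappa)/N$ and $C=\frac{\ell-1}{2\kappa}$ (or any comparable constant absorbing the remainders $r$).

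For the $N$-step contraction I unfold the conditional expectation exactly as in the proof of Lemma \ref{lem:martin1}, but over words of length $N$ instead of one letter:
\begin{equation*}
\mathbb E[W(\rho_{n+N})\mid\rho_n]=\sum_{I\in\mathcal O^N}W(V_I\rho_n V_I^\dag)=\frac12\sum_{\alpha\neq\beta}\sum_{I\in\mathcal O^N}\sqrt{\tr{M_\alpha V_I\rho_n V_I^\dag}\tr{M_\beta V_I\rho_n V_I^\dag}}.
\end{equation*}
The crucial input is the block-diagonal structure $M_\alpha V_I=V_I M_\alpha$, combined with $M_\alpha^2=M_\alpha$. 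From this it follows (by a short trace manipulation using the commutation and cyclicity) that $\tr{M_\alpha V_I\rho_n V_I^\dag}=\tr{M_\alpha\rho_n}\tr{V_I\tilde\rho_n^{(\alpha)}V_I^\dag}$ whenever $\rho_n\in\mathcal A_{\alpha,\beta}$, and is zero otherwise. Factoring the $\alpha,\beta$-dependent prefactor out of the sum on $I$ yields
\begin{equation*}
\mathbb E[W(\rho_{n+N})\mid\rho_n]=\frac12\sum_{\alpha\neq\beta}\mathbf 1_{\rho_n\in\mathcal A_{\alpha,\beta}}\sqrt{\tr{M_\alpha\rho_n}\tr{M_\beta\rho_n}}\,\Sigma_{\alpha,\beta}(\rho_n),
\end{equation*}
where $\Sigma_{\alpha,\beta}(\rho_n)=\sum_{I\in\mathcal O^N}\sqrt{\tr{V_I\tilde\rho_n^{(\alpha)}V_I^\dag}\tr{V_I\tilde\rho_n^{(\beta)}V_I^\dag}}$ is bounded by $\kappa$ uniformly in $\alpha,\beta$ and $\rho_n\in\mathcal A_{\alpha,\beta}$ thanks to Lemma \ref{lem:rate}. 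This directly gives the claimed contraction by $\kappa$.

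The only subtle step is the factorisation $\tr{M_\alpha V_I\rho_n V_I^\dag}=\tr{M_\alpha\rho_n}\tr{V_I\tilde\rho_n^{(\alpha)}V_I^\dag}$, which I expect to be the main technical point: one writes $M_\alpha V_I\rho_n V_I^\dag=V_I M_\alpha\rho_n V_I^\dag$ by commutation, then uses $M_\alpha^2=M_\alpha$ together with cyclicity of the trace and the fact that $V_I^\dag V_I$ also commutes with $M_\alpha$ to absorb the missing factor of $M_\alpha$ on the right, obtaining $\tr{V_I M_\alpha\rho_n M_\alpha V_I^\dag}=\tr{M_\alpha\rho_n}\tr{V_I\tilde\rho_n^{(\alpha)}V_I^\dag}$. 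Everything else is a bookkeeping exercise: combining the $N$-step contraction with the supermartingale bound $\mathbb E[W(\rho_{n+r})]\leq\mathbb E[W(\rho_n)]$ for $0\leq r<N$ and using $W(\rho_0)\leq(\ell-1)/2$ delivers the exponential rate $\gamma=-\ln(\kappa)/N>0$.
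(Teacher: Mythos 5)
Your proposal is correct and follows essentially the same route as the paper: the $N$-step contraction $\mathbb E[W(\rho_{n+N})\mid\rho_n]\leq\kappa W(\rho_n)$ obtained by unfolding the conditional expectation over words of length $N$, factoring out $\sqrt{\tr{M_\alpha\rho_n}\tr{M_\beta\rho_n}}$ via the commutation $M_\alpha V_I=V_IM_\alpha$ (which is exactly the normalization hidden in the paper's display), and invoking Lemma \ref{lem:rate}, followed by interpolation at the intermediate times. The only cosmetic difference is that the paper bounds the remainder steps by $W\leq(\ell-1)/2$ directly rather than via the supermartingale property of Lemma \ref{lem:martin1}; both give the same rate $\gamma=-\ln(\kappa)/N$.
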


\begin{proof}
To overcome inequality \eqref{eq:unpas}, which cannot be obtained for one letter, we consider the shift of $(\rho_n)$ with the length of identifiability of Proposition \ref{IDmod}. 

Let $N$ be the integer of Proposition \ref{IDmod}, we have for all $k\in\mathbb N$
\begin{align*}
  \EE{W(\rho_{k+N})}{\rhok}&=\sum_{I\in\mathcal O^N} W \o \frac{V_I \rhok V_I^\dag}{\tr{V_I \rhok V_I^\dag}} \c \tr{V_I \rhok V_I^\dag}\\
  &=\sum_{I\in\mathcal O^N} W \o V_I \rhok V_I^\dag \c \\
  &= \frac{1}{2} \sum_{I\in\mathcal O^N} \sum_{\alpha \neq \beta} \sqrt{\tr{M_\alpha V_I \rhok V_I^\dag} \tr{M_\beta V_I \rhok V_I^\dag}} \\
  &=\frac{1}{2} \sum_{\alpha \neq \beta} \sum_{I\in\mathcal O^N}  \mathds 1_{\rhok\in\mathcal A_{\alpha,\beta}}\sqrt{\frac{\tr{M_\alpha V_I \rhok V_I^\dag}}{\tr{M_\alpha \rhok}} \frac{\tr{M_\beta V_I \rhok V_I^\dag}}{\tr{M_\beta \rhok}}} \sqrt{\tr{M_\alpha \rhok} \tr{M_\beta \rhok} } \\
  &\leqslant \kappa W(\rhok),
\end{align*}
where $\kappa<1$ is the constant appearing in Lemma \ref{lem:rate}. Taking expectation, we get for all $k\in\mathbb N$
$$\mathbb E[W(\rho_{k+N})]\leq \kappa\mathbb E[W(\rho_{k})].$$
Then for all $q\in\mathbb N$ and all $k\in\mathbb N$
$$\mathbb E[W(\rho_{k+qN})]\leq \kappa^q\mathbb E[W(\rho_{k})].$$
Now let $n\geq N.$ There exist $q,p$ with $0\leq p<N$ such that $n=qN+p$. This implies 

\begin{eqnarray*}
E[W(\rho_{n})]&\leq& \kappa^{\lfloor\frac{n}{N}\rfloor}\max_{k=0,\ldots,N-1}\{\mathbb E[W(\rho_k)]\}\\
&\leq& \kappa^{\frac nN-1}\frac{(\ell-1)}{2}.
\end{eqnarray*}
Then we have shown the exponential convergence after time $N$ with $\gamma=-\ln(\kappa^{1/N})$ and a constant $\kappa^{-1}(\ell-1)/2.$ Up to changing the constant in front, we have the exponential convergence for all time by gathering all the steps before time $N$.
\end{proof}

\begin{corollary}
   The selection of minimal invariant subspace occurs exponentially almost surely. More precisely, let $\gamma$ be the parameter of Theorem \ref{thm:expselect}. Then for all $0<\lambda<\gamma$, we have almost surely 
   $$\lim_{n\to\infty}e^{\lambda n}W(\rhon)=0.$$
   %In particular the selection is exponential almost surely.
\end{corollary}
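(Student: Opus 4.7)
The plan is to combine the expectation bound from Theorem \ref{thm:expselect} with a standard Markov/Borel--Cantelli argument, exploiting the gap between $\lambda$ and $\gamma$ to get a summable tail.

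First, fix $0<\lambda<\gamma$ and choose any $\delta\in(0,\gamma-\lambda)$. I would apply Markov's inequality to the nonnegative random variable $W(\rho_n)$: for each $n$,
\begin{equation*}
\mathbb P\bigl(e^{\lambda n}W(\rho_n)>e^{-\delta n}\bigr)=\mathbb P\bigl(W(\rho_n)>e^{-(\lambda+\delta)n}\bigr)\leq e^{(\lambda+\delta)n}\,\mathbb E[W(\rho_n)]\leq C\,e^{-(\gamma-\lambda-\delta)n}.
\end{equation*}
Since $\gamma-\lambda-\delta>0$, the right-hand side is the general term of a convergent geometric series.

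By the Borel--Cantelli lemma, the event $\{e^{\lambda n}W(\rho_n)>e^{-\delta n}\}$ occurs only finitely often almost surely. Hence there exists an almost surely finite random integer $n_0$ such that $e^{\lambda n}W(\rho_n)\leq e^{-\delta n}$ for all $n\geq n_0$, which immediately yields $e^{\lambda n}W(\rho_n)\to 0$ almost surely.

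There is no real obstacle here: the nontrivial work was in Theorem \ref{thm:expselect}, and the corollary is just the usual upgrade from convergence in mean at rate $\gamma$ to almost sure convergence at any strictly slower exponential rate. An equivalent route would be to note that $\mathbb E\bigl[\sum_{n\geq 0} e^{\lambda' n}W(\rho_n)\bigr]\leq C\sum_{n\geq 0}e^{-(\gamma-\lambda')n}<\infty$ for some $\lambda'\in(\lambda,\gamma)$, so the series $\sum_n e^{\lambda' n}W(\rho_n)$ converges almost surely, forcing its terms (and a fortiori $e^{\lambda n}W(\rho_n)=e^{-(\lambda'-\lambda)n}\cdot e^{\lambda' n}W(\rho_n)$) to vanish almost surely.
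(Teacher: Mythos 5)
Your argument is correct and follows essentially the same route as the paper: Markov's inequality applied to the exponential bound of Theorem \ref{thm:expselect} to get a summable tail, then Borel--Cantelli. The only cosmetic difference is that you use a decaying threshold $e^{-\delta n}$ where the paper uses a fixed $\varepsilon$ (and then implicitly lets $\varepsilon\to 0$), which if anything streamlines the final step.
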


\begin{proof}
This is an easy application of Borel-Cantelli lemma. For all $\varepsilon>0$, using the Markov inequality, we have that
$$\mathbb P[e^{\lambda n}W(\rhon)>\varepsilon]\leq \frac{\mathbb E[e^{\lambda n}W(\rhon)]}{\varepsilon}\leq\frac{Ce^{-(\gamma-\lambda) n}}{\varepsilon}.$$
This way
$$\sum_{n\geq 1}\mathbb P[e^{\lambda n}W(\rhon)>\varepsilon]<\infty$$
and by Borel-Cantelli lemma
$$\mathbb P\left[\bigcup_{N\geq0}\bigcap_{n\geq N}e^{\lambda n}W(\rhon)\leq\varepsilon\right]=1,$$
which implies that almost surely
$$\lim_{n\to\infty}e^{\lambda n}W(\rhon)=0.$$
%which yields the result.
\end{proof}

%Remark that since $(W(\rho_n))$ is a non-negative suppermartingale (see Lemma \ref{lem:martin1})), it converges almost surely. Since it converges in mean towards $0,$ it implies that it converges almost surely towards $0$ and then we recover the result of Proposition \ref{prop:convergence}. %of \cite{BePelLin}. %\textcolor{red}{C: j'ai rajouté ca faudrait aussi le preciser pour le controle}

Having observed that exponential convergence is a consequence of the identifiability condition,  %-meaning that only measurements affect the evolution of the system-
the next question that arises is whether we can force quantum trajectories to reach a target subspace by applying a control input. This is the topic of the next section.

\section{Feedback stabilization of subspaces}\label{sec:feedback}
%\textcolor{red}{C: Pourquoi dire qu'il y a un control au temps N et pas dire qu'il y a un controle tout le temps et que souvent il est nul}
In this section, the goal is to stabilize one of the minimal invariant subspaces. We denote the index of this subspace by $\bar\alpha.$
In general, the controlled dynamics consists of two steps from $\rho_n$ to $\rho_{n+1}$: first a measurement, then a feedback control action. For the sake of simplicity, we denote the intermediate measurement step by
\begin{equation}\label{inter_step}
    {\rho}_{n+\frac{1}{2}}=\frac{V_{i_n}{\rho}_{n}V_{i_n}^\dag}{\operatorname{tr}\left(V_{i_n}{\rho}_{n}V_{i_n}^\dag\right)}.
\end{equation}
It is followed by the feedback control step given by
\begin{equation}\label{control_step}
    {\rho}_{n+1}= U(u_n) \rho_{n+\frac{1}{2}} U(u_n)^\dag,
\end{equation}
where $U(u)=e^{-iuH}$ and $H$ denotes the so-called control Hamiltonian operator which is Hermitian. The quantity $u_n$ is a real scalar corresponding to the control intensity applied at step $n$.

Let us define, for all $\rho\in\mathcal D,$
%\textcolor{red}{C: j'ai ajoute des pour tout $\rho$ un peu partout il faut verifier partout. Des fois y avait ecrit $V$ et $R$ sont des suppermartinagles c'est un peu brouillon. J'ai change mais pas a l'abris qu'il y ai d'autres truces de ce genre }
$$V(\rho)=\sqrt{1 - \tr{\Malphabar \rho}} \text{\quad and \quad } R(\rho)=\sum_{\beta\neq \alphabar}\sqrt{\tr{M_\beta\rho}}.$$
We consider the following Lyapunov function
$$ Z(\rho)= V(\rho) + \varepsilon R(\rho),$$
where $\varepsilon$ is a small parameter to be specified later. The feedback control we will consider is the following:
\begin{equation}
    u_{n}= \begin{cases}\underset{u \in [-\ubar, \ubar]}{\argmin} \EE{Z(U(u) \rho_{n+\frac{1}{2}} U(u)^\dag   )}{ \rho_{n-N+1}} & \text { if } n= qN -1 ,~q\in \mathbb N   \\ 0 & \text { if } n\neq qN-1,~q\in \mathbb N  \end{cases}
    \label{feedback}
\end{equation}
  
In essence, the control strategy consists in acting on the system every $N$ steps, where $N$ is the length of identifiability of Proposition \ref{IDmod}.
\begin{remark}
The integer $N$ used in the aforementioned control strategy can be substituted with any value $N'$ such that $N' \geq N$, since the identifiability result remains valid, as demonstrated in the proof of Proposition \ref{IDmod}. \end{remark}
Now, let us introduce an assumption concerning the control Hamiltonian, which can be interpreted as a controllability assumption.
\begin{assumption}
    Let  $\{ \ket{\phi_j}, j=1,\ldots,d_{\bar\alpha} \}$ be an orthonormal basis of $\mathcal H_{\bar\alpha}$, $$\textrm{Vect}\{H^k\ket{\phi_j},~k=1,\ldots,d,~j=1,\ldots,d_{\bar\alpha}\}=\bigoplus_{\beta \neq \alphabar} \H_\beta.$$
    %There exists a pure state $\ket{\psi} \in \Halphabar$ such that the family $\{H^j\ket\psi,\, j\in\mathbb N^*\}$ generates the space $\bigoplus_{\beta\neq \bar\alpha}\mathcal H_\beta.$%\\ $\{ [H,\Malphabar]\ket{\psi}, [H,[H,\Malphabar]]\ket{\psi}, \ldots\}$ is full-rank.
    \label{controllability}
\end{assumption}
The assumption does not depend on a particular choice of an orthonormal basis. If it holds for one basis, it holds for any basis. We stop the power of matrix $H^k$ for $k=d$ since by Cayley-Hamilton theorem, any upper power $H^k,\,k>d,$ can be expressed as a polynomial of $H$ of degree less than $d.$  

Our main result is the following theorem.

\begin{theorem}\label{thm:control}
 Suppose that Assumption \eqref{controllability} holds. The feedback scheme \eqref{feedback} exponentially stabilizes in mean the subspace $\Halphabar$ in the following sense: there exist $\bar C>0$ and $\bar \gamma>0$ such that for all $n \geqslant 0$
 $$\E{\sqrt{1-\tr{M_{\alphabar} \rho_n}}} \leqslant \bar C e^{-\bar \gamma n  }.$$
\end{theorem}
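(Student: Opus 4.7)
My plan is to prove a one-block multiplicative contraction
\begin{equation*}
\mathbb{E}[Z(\rho_{qN}) \mid \rho_{(q-1)N}] \leq \eta \, Z(\rho_{(q-1)N}),
\end{equation*}
for some $\eta \in (0,1)$ uniform in $q$ and in $\rho_{(q-1)N}$. Iterating this gives $\mathbb{E}[Z(\rho_{qN})] \leq \eta^{q}\,\mathbb{E}[Z(\rho_0)]$, and the bound is pushed to every $n\geq 0$ via the supermartingale property of $Z$ on the intermediate $N-1$ steps where $u_n=0$. Since $V\leq Z$ and both are bounded, the claimed exponential decay of $\mathbb{E}[V(\rho_n)]$ follows. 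The supermartingale property itself is straightforward: each $\tr{M_\beta \rho_n}$ is a bounded martingale because $\Phi^\dag(M_\beta)=M_\beta$, and Jensen's inequality applied to the concave maps $x\mapsto \sqrt{x}$ and $x\mapsto\sqrt{1-x}$ yields the supermartingale property for $R$, $V$, and hence $Z$. At the controlled steps, since $u=0$ is admissible, the argmin in \eqref{feedback} only improves this bound.

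I split the one-block estimate according to $p:=\tr{M_{\bar\alpha}\rho_{(q-1)N}}$. In the regime $p\geq\delta$, the free $N$-step dynamics already provides a strict contraction by the exact mechanism of Theorem \ref{thm:expselect}: Proposition \ref{IDmod} and Lemma \ref{lem:rate}, applied pair-by-pair to each block couple $(\bar\alpha,\beta)$, provide a uniform factor $\kappa<1$ contracting every cross term $\sqrt{\tr{M_{\bar\alpha}\rho}\tr{M_\beta\rho}}$. Combining this with Jensen applied to each $\sqrt{\tr{M_\beta\rho_n}}$ and with the martingality of $\tr{M_{\bar\alpha}\rho_n}$ yields $\mathbb{E}[Z(\rho_{qN-1/2})\mid\rho_{(q-1)N}]\leq \eta_1(\delta)\,Z(\rho_{(q-1)N})$ for some $\eta_1(\delta)<1$, and the feedback step preserves this bound.

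In the regime $p<\delta$, the free dynamics cannot produce mass on $\H_{\bar\alpha}$ and Assumption \ref{controllability} enters. Using $u$-suboptimality of the argmin together with Jensen's inequality applied to $\sigma\mapsto\sqrt{1-\tr{M_{\bar\alpha}\sigma}}$, for any admissible choice $u^*$,
\begin{equation*}
\mathbb{E}[V(\rho_{qN})\mid \rho_{(q-1)N}]\leq \sqrt{1-\tr{U(u^*)^\dag M_{\bar\alpha}\,U(u^*)\,\Phi^N(\rho_{(q-1)N})}}.
\end{equation*}
Expanding $U(u)=e^{-iuH}$ in powers of $H$, truncated at $d$ via Cayley--Hamilton, and using Assumption \ref{controllability}, I will show that the real-analytic map $u\mapsto \tr{U(u)^\dag M_{\bar\alpha}U(u)\sigma}$ cannot vanish identically on $[-\bar u,\bar u]$ for any state $\sigma$ with $\supp\sigma\subset \bigoplus_{\beta\neq\bar\alpha}\H_\beta$: the vanishing of all its derivatives at $u=0$ would force $\sigma$ to be orthogonal to $\bigoplus_{\beta\neq\bar\alpha}\H_\beta$, a contradiction. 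A compactness argument on the set of such $\sigma$ then produces a uniform $\delta_0>0$ and a selection $u^*=u^*(\sigma)\in[-\bar u,\bar u]$ with $\tr{U(u^*)^\dag M_{\bar\alpha}U(u^*)\sigma}\geq \delta_0$. Applied to $\sigma=\Phi^N(\rho_{(q-1)N})$, this yields $\mathbb{E}[V(\rho_{qN})\mid\rho_{(q-1)N}]\leq \sqrt{1-(1-\delta)\delta_0}$. Since $R\leq\sqrt{\ell-1}\,V$ on all of $\mathcal D$ by Cauchy--Schwarz, taking $\varepsilon$ small enough converts this into $\mathbb{E}[Z(\rho_{qN})\mid\rho_{(q-1)N}]\leq \eta_2\, Z(\rho_{(q-1)N})$ with $\eta_2<1$.

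Setting $\eta=\max(\eta_1(\delta),\eta_2)$ closes the one-block estimate and thus the theorem. The main obstacle I anticipate is the uniform patching of the two regimes: the identifiability-based factor $\eta_1(\delta)$ degrades as $\delta\to 0$, while the control-based factor $\eta_2$ requires $\varepsilon$ to be small relative to $\delta_0$; the parameters $\varepsilon$, $\delta$ and $\delta_0$ must therefore be tuned simultaneously, and the uniformity over $\rho_{(q-1)N}$ is ensured by continuity of the relevant maps combined with compactness of $\mathcal D$ and of $[-\bar u,\bar u]$.
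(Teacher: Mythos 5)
Your overall architecture coincides with the paper's: the same Lyapunov function $Z=V+\varepsilon R$, a one-block contraction over $N$ steps split according to the target mass $\tr{M_{\alphabar}\rho}$, a controllability/analyticity lemma with compactness for the low-mass regime (this is essentially the paper's Lemma \ref{lem:control}, up to a repairable imprecision: your bound $(1-\delta)\delta_0$ applies the uniform estimate, valid for states supported off $\Halphabar$, to $\Phi^N(\rho_{(q-1)N})$, which is not so supported; the paper bridges this by a continuity/neighborhood step), and supermartingale interpolation at intermediate times. The genuine gap is in your regime $\tr{M_{\alphabar}\rho}\geq\delta$. You claim that the "exact mechanism of Theorem \ref{thm:expselect}" (Proposition \ref{IDmod} plus Lemma \ref{lem:rate}) contracts the cross terms $\sqrt{\tr{M_{\alphabar}\rho}\tr{M_\beta\rho}}$, and that combining this with Jensen applied to $\sqrt{\tr{M_\beta\rho_n}}$ and martingality of $\tr{M_{\alphabar}\rho_n}$ yields $\mathbb E[Z(\rho_{qN-1/2})\mid\rho_{(q-1)N}]\leq\eta_1(\delta)Z(\rho_{(q-1)N})$ with $\eta_1(\delta)<1$. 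This does not follow. Lemma \ref{lem:rate} contracts $W$, whose terms are geometric means of \emph{two block masses}, whereas $Z$ is built from $V$ and $R$, which are not cross terms; Jensen and martingality give only the non-strict bounds $\mathbb E[V(\rho_{n+N})\mid\rho_n]\leq V(\rho_n)$ and $\mathbb E[\sqrt{\tr{M_\beta\rho_{n+N}}}\mid\rho_n]\leq\sqrt{\tr{M_\beta\rho_n}}$, so nothing in your list produces a strict factor for $Z$. Nor can one route the estimate through $W$: the comparison $Z\leq C\,W$ holds only where $\tr{M_{\alphabar}\rho}$ is bounded below, and the state after $N$ steps may exit that region, so $\mathbb E[W(\rho_{n+N})]\leq\kappa W(\rho_n)$ cannot be converted into a bound on $\mathbb E[Z(\rho_{n+N})]$.

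What is actually needed is a strict contraction of $\mathbb E[R]$, i.e.\ of the sums $\sum_{I\in\mathcal O^N}\sqrt{\tr{M_\beta V_I\tilde\rho^{(\beta)}V_I^\dag}\,\tr{V_I\rho V_I^\dag}}$ for $\beta\neq\alphabar$, which compare the block-$\beta$ measure with the measure of the \emph{full} state $\rho$ rather than with another block's measure. This is the paper's Lemma \ref{lem:CS2}, and deducing strictness from Assumption \textbf{(ID)} requires a case analysis (two non-target blocks charged, versus exactly one, the latter exploiting $\tr{M_{\alphabar}\rho}\geq\delta$), followed by the framing $\frac{1}{\sqrt{\ell-1}}R\leq V\leq R$ to turn a contraction of $R$ alone (with $V$ merely a supermartingale) into a contraction of $Z$. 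Moreover, your uniformity claim via "continuity of the relevant maps combined with compactness of $\mathcal D$" fails precisely on the part of $S_{\geq\delta}$ where $\tr{M_{\alphabar}\rho}$ is close to $1$: there $R(\rho)\to0$, the contraction ratio $\kappa(\rho)$ is of the form $0/0$ and does not extend continuously to that boundary, so its supremum over all of $S_{\geq\delta}$ cannot be bounded away from $1$ by compactness alone. This is exactly why the paper splits your single regime into two: on the compact set $S_{\geq\delta}\cap S_{\leq 1-\delta}$, continuity and compactness give $\kappa_1<1$; on $S_{>1-\delta}$ a different estimate is used, namely $\sqrt{a+b}\leq\sqrt a+\sqrt b$ splitting $\mathbb E[R]$ into a genuine two-block cross term bounded by $\kappa'<1$ plus a remainder bounded by $\sqrt\delta\,R$, giving the factor $\kappa'+\sqrt\delta$, which is $<1$ only after $\delta$ is chosen small. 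Without these two ingredients (Lemma \ref{lem:CS2} with its case analysis, and a separate near-target argument), the one-block contraction you assert in the regime $\tr{M_{\alphabar}\rho}\geq\delta$ is unproven, and the rest of your proof has nothing to iterate.
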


As in the case without control, an almost sure convergence can also be concluded, whose proof is strictly the same as the one without feedback.

\begin{corollary}
    The feedback scheme stabilizes the subspace $\mathcal H_{\bar \alpha}$ exponentially almost surely. More precisely, for all $0<\lambda<\bar\gamma,$ we have almost surely
    $$\lim_{n\to\infty}e^{\lambda n}\sqrt{1-\tr{M_{\bar \alpha}\rhon}}=0.$$ 
\end{corollary}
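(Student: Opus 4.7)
The plan is to mimic exactly the Borel--Cantelli argument used for the analogous corollary in Section~\ref{sec:openloop}, now applied to the controlled Lyapunov-type quantity $X_n := \sqrt{1-\tr{M_{\bar\alpha}\rho_n}}$. The only nontrivial input needed is the expectation bound $\mathbb E[X_n]\leq \bar C e^{-\bar\gamma n}$ already established in Theorem~\ref{thm:control}. I would therefore open by fixing $\lambda\in(0,\bar\gamma)$ and an arbitrary $\varepsilon>0$, and then apply Markov's inequality to the nonnegative random variable $e^{\lambda n}X_n$.

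Concretely, for each $n\in\mathbb N$, Markov's inequality together with Theorem~\ref{thm:control} yields
$$\mathbb P\bigl[e^{\lambda n}X_n>\varepsilon\bigr]\leq \frac{\mathbb E[e^{\lambda n}X_n]}{\varepsilon}\leq \frac{\bar C}{\varepsilon}\,e^{-(\bar\gamma-\lambda)n}.$$
Since $\bar\gamma-\lambda>0$, the geometric series $\sum_{n\geq 1}e^{-(\bar\gamma-\lambda)n}$ converges, so $\sum_n \mathbb P[e^{\lambda n}X_n>\varepsilon]<\infty$. The Borel--Cantelli lemma then implies that almost surely the event $\{e^{\lambda n}X_n>\varepsilon\}$ occurs only for finitely many $n$, i.e.
$$\mathbb P\!\left[\bigcup_{N\geq 0}\bigcap_{n\geq N}\bigl\{e^{\lambda n}X_n\leq \varepsilon\bigr\}\right]=1.$$

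To upgrade this to the stated almost sure limit, I would apply the above to a countable sequence $\varepsilon_k=1/k$, intersect the corresponding almost sure events (a countable intersection of full-measure sets still has full measure), and conclude that on this intersection $\limsup_{n\to\infty}e^{\lambda n}X_n\leq 1/k$ for every $k$, hence $\lim_{n\to\infty}e^{\lambda n}X_n=0$ almost surely. There is essentially no obstacle here: the structure is identical to the Borel--Cantelli argument used for the open-loop corollary following Theorem~\ref{thm:expselect}, with $W(\rho_n)$ replaced by $\sqrt{1-\tr{M_{\bar\alpha}\rho_n}}$ and $(C,\gamma)$ replaced by $(\bar C,\bar\gamma)$; the only thing being invoked from the controlled setting is the expectation estimate of Theorem~\ref{thm:control}, which is why the authors remark that the proof is ``strictly the same as the one without feedback.''
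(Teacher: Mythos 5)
Your proposal is correct and follows exactly the paper's intended route: the authors state that the proof is ``strictly the same as the one without feedback,'' namely the Markov inequality plus Borel--Cantelli argument applied to $\sqrt{1-\tr{M_{\bar\alpha}\rho_n}}$ using the expectation bound of Theorem~\ref{thm:control}. Your additional step of intersecting over $\varepsilon_k=1/k$ to pass from the Borel--Cantelli event to the almost sure limit is a standard detail the paper leaves implicit, not a deviation.
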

Before presenting the proof of the theorem, we shall need the following technical lemmas.

\subsection{Technical lemmas}
We set up the notation
$$S_{<\delta}=\{\rho\in\mathcal D\mid \tr{\Malphabar \rho}<\delta\},$$
for $\delta>0.$ Similar notation shall be used with $>$, $\leq$ and $\geq.$

\begin{lemma}\label{lem:control}
There exists $\delta_0>0$ such that
$$
\inf _{\rho \in S_{<\delta_0}} \max_{u \in [-\ubar, \ubar]} \operatorname{Tr}\left(\Malphabar U(u) \Phi^N ( \rho ) U(u)^\dag \right) > 2 \delta_0.
$$
\end{lemma}
\begin{proof}
%\textcolor{red}{C: pourquoi il n y a pas les indices dans les sommes, faut harmoniser ca sur tout le papier}\comment{M:J'ai testé ça fait rès lourd. On precise juste en dessous le lieu de vie de i ou j}
    The proof is very similar to \cite[Lemma 1 $\&$ Corollary 1]{mirrahimi2009feedback}, with a slight adaptation that we make precise here. Define $ S_0=\{ \rho\in\mathcal D \mid \tr{\Malphabar\rho}=0\},$ i.e. the set of states supported by $\bigoplus_{\beta \neq \alphabar} \H_\beta$. Let us first show  that there exists $\delta_0>0$ such that
    $$\inf _{\rho \in S_0} \max_{u \in [-\ubar, \ubar]} \operatorname{Tr}\left(\Malphabar U(u) \Phi^N ( \rho ) U(u)^\dag \right) > 2 \delta_0.$$
    By contradiction assume that $$\inf _{\rho \in S_0} \max_{u \in [-\ubar, \ubar]} \operatorname{Tr}\left(\Malphabar U(u) \Phi^N ( \rho ) U(u)^\dag \right)=0.$$ Using the continuity of the application $ \rho \mapsto \max_{u \in [-\ubar, \ubar]} \tr{\Malphabar U(u)  \Phi^N(\rho) U(u) ^\dag} $ and the compactness of $S_0,$ this implies that there exists $\rho\in S_0$ such that for all $u \in [-\ubar,\ubar],$ 
    \begin{equation}\tr{\Malphabar U(u) \Phi^N(\rho) U(u)^\dag } = 0.
    \label{eq:condition}
    \end{equation}
    Since $\rho$ is supported by $\bigoplus_{\beta \neq \alphabar} \H_\beta$, the state $\Phi^N(\rho)$ is still supported by $\bigoplus_{\beta \neq \alphabar} \H_\beta$. We can then write $$\Phi^N(\rho)= \sum_{i=1}^{d-d_{\alphabar}}\lambda_i \ketbra{\psi_i}$$ where $\{ \ket{\psi_i}, i=1,\ldots,d-d_{\alphabar} \}$ is an orthonormal basis of $\bigoplus_{\beta \neq \alphabar} \H_\beta $. We also write $$\Malphabar= \sum_{j=1}^{d_{\alphabar}} \ketbra{\phi_j}$$ where  $\{ \ket{\phi_j}, j=1,\ldots,d_{\bar\alpha} \}$ is an orthonormal basis of $\Halphabar.$ The condition \eqref{eq:condition}
     implies that
    $$\sum_{i=1}^{d-d_{\alphabar}}\lambda_i\tr{\Malphabar U(u) \ketbra{\psi_i} U(u)^\dag }=0.$$
    Now let $i$ be such that $\lambda_i \neq 0$. Since all the terms of the previous equation are non-negative, we have $$\tr{\Malphabar U(u) \ketbra{\psi_i} U(u)^\dag }=\sum_{j=1}^{d_{\alphabar}}\tr{\ketbra{\phi_j} U(u) \ketbra{\psi_i} U(u)^\dag }=0.$$ Again, all terms are non-negative, which implies that for all $j=1,\ldots,d_{\alphabar}$
    $$ \bra{\psi_i} U(u) \ket{\phi_j} = 0.$$
    Then differentiating $k$ times with respect to $u$ and evaluating this derivative at $u=0$, we get
    $$ \bra{\psi_i} H^k \ket{\phi_j} = 0$$
    for all $k \in \{1,\ldots,d \}$ and all $ j \in \{1,\ldots,d_{\alphabar} \}$. This leads to a contradiction since the family $H^k \ket{\phi_j}$ generates the subspace $\bigoplus_{\beta \neq \alphabar} \H_\beta.$ Therefore, for all $\rho \in S_0$,
    $$ \max_{u \in [-\ubar, \ubar]}\tr{\Malphabar U(u) \Phi^N(\rho) U(u)^\dag} > 0.$$
    By compactness of $S_0$, there exists $\upsilon >0$ such that 
    $$ \inf_{ \rho \in S_0    }\max_{u \in [-\ubar, \ubar]}\tr{\Malphabar U(u) \Phi^N(\rho) U(u)^\dag} \geq \upsilon >0.$$
    Moreover, by continuity of the functions $ \rho \mapsto \max_{u \in [-\ubar, \ubar]} \tr{\Malphabar U(u)  \Phi^N(\rho) U(u) ^\dag} $ and $ \rho \mapsto \tr{\Malphabar \rho}$ and compactness of the set of density matrices, there exists $\eta>0$ such that
    $$   \inf_{ \rho \in S_{<\eta}   }\max_{u \in [-\ubar, \ubar]}\tr{\Malphabar U(u) \Phi^N(\rho) U(u)^\dag} > \frac{\upsilon}{2}.$$
    Taking $\delta_0=min( \eta , \frac{\upsilon}{4})$ leads to the desired lemma.
\end{proof}

The next lemma is again a consequence of the identifiability assumption, and will be essential in the proof of the final theorem.
\begin{lemma}\label{lem:CS2}
    For all $n\in\mathbb N^*$, for all $\beta \in \{1,\ldots,\ell \}$ and for all $\rho$ such that $\tr{M_\beta\rho}\neq0$, we have
    $$\sum_{I\in\mathcal O^n}\sqrt{\tr{M_\beta V_I\tilde{\rho}^{(\beta)} V_I^\dag}\tr{V_I\rho V_I^\dag}}\leq 1.$$
   Furthermore
    $$\sum_{I\in\mathcal O^n}\sqrt{\tr{M_\beta V_I\tilde{\rho}^{(\beta)} V_I^\dag}\tr{V_I\rho V_I^\dag}}= 1$$
    if and only if 
    $$\tr{V_I\tilde{\rho}^{(\beta)}V_I^\dag}=\tr{V_I\rho V_I^\dag}$$
    for all $I\in\mathcal O^n.$
\end{lemma}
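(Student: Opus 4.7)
The plan is to reduce the statement to a direct application of the Cauchy--Schwarz inequality, after first exploiting the block-diagonal structure of the Kraus operators.

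The preliminary observation is that since $M_\beta V_i = V_i M_\beta$ for every $i \in \mathcal{O}$, we also have $M_\beta V_I = V_I M_\beta$ for every finite word $I$. Moreover, $\tilde{\rho}^{(\beta)}$ is supported in $\mathcal{H}_\beta$ by construction, so $M_\beta \tilde{\rho}^{(\beta)} = \tilde{\rho}^{(\beta)} M_\beta = \tilde{\rho}^{(\beta)}$. Combining these facts with cyclicity of the trace yields
$$\tr{M_\beta V_I \tilde{\rho}^{(\beta)} V_I^\dag} = \tr{V_I M_\beta \tilde{\rho}^{(\beta)} V_I^\dag} = \tr{V_I \tilde{\rho}^{(\beta)} V_I^\dag},$$
so the summand we are dealing with is really $\sqrt{\tr{V_I \tilde{\rho}^{(\beta)} V_I^\dag} \, \tr{V_I \rho V_I^\dag}}$.

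Next I would apply Cauchy--Schwarz: setting $a_I = \tr{V_I \tilde{\rho}^{(\beta)} V_I^\dag}$ and $b_I = \tr{V_I \rho V_I^\dag}$, which are both non-negative, one obtains
$$\sum_{I \in \mathcal{O}^n} \sqrt{a_I b_I} \leq \sqrt{\sum_{I \in \mathcal{O}^n} a_I} \, \sqrt{\sum_{I \in \mathcal{O}^n} b_I}.$$
Using the identity $\Phi^n(\sigma) = \sum_{I \in \mathcal{O}^n} V_I \sigma V_I^\dag$ recalled in Section~\ref{sec:Prese} together with the trace-preserving property of $\Phi$, each of the two sums on the right equals $\tr{\Phi^n(\tilde{\rho}^{(\beta)})} = 1$ and $\tr{\Phi^n(\rho)} = 1$ respectively. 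This establishes the inequality.

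For the equality case, I would invoke the standard characterisation of equality in Cauchy--Schwarz: equality $\sum_I \sqrt{a_I b_I} = 1$ forces the existence of $\lambda \geq 0$ such that $a_I = \lambda b_I$ for every $I \in \mathcal{O}^n$. Summing over $I$ and using that both marginals equal $1$ pins down $\lambda = 1$, so $\tr{V_I \tilde{\rho}^{(\beta)} V_I^\dag} = \tr{V_I \rho V_I^\dag}$ for all $I \in \mathcal{O}^n$; conversely, if this equality holds termwise, substituting into the sum immediately returns $1$. There is no substantive obstacle here: the only point requiring care is the initial reduction using $M_\beta V_I = V_I M_\beta$, which is precisely what makes the first factor under the square root amenable to the Cauchy--Schwarz argument.
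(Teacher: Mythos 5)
Your proof is correct and follows essentially the same route as the paper: Cauchy--Schwarz applied to the non-negative weights $\operatorname{tr}(V_I\tilde\rho^{(\beta)}V_I^\dag)$ and $\operatorname{tr}(V_I\rho V_I^\dag)$, with both marginal sums evaluating to $1$ by trace preservation of $\Phi^n$, and the equality case pinned down by summing the proportionality relation. Your preliminary reduction $\operatorname{tr}(M_\beta V_I\tilde\rho^{(\beta)}V_I^\dag)=\operatorname{tr}(V_I\tilde\rho^{(\beta)}V_I^\dag)$ via $M_\beta V_I=V_IM_\beta$ is a nice touch: the paper instead evaluates the first sum using $\Phi^\dag(M_\beta)=M_\beta$, but it implicitly relies on your identity when it states the equality condition in terms of $\operatorname{tr}(V_I\tilde\rho^{(\beta)}V_I^\dag)$, so you have made explicit a step the paper leaves tacit.
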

\begin{proof}
 Let $n\in\mathbb N^*$, let $\beta \in \{1,\ldots,\ell\}$ 
 and let $\rho$ be such that $\tr{M_\beta\rho}\neq0$.
 %$\textcolor{red}{a des endroits on met $\beta=1,\ldots,\ell$ je prefere ca mais faut regarder sur tout le papier. A plein d'endroit on n'ecrit pas ou vievnt les gens.}
 Once again, we shall use Cauchy-Schwarz inequality, which gives
    \begin{eqnarray*}
        \sum_{I\in\mathcal O^n}\sqrt{\tr{M_\beta V_I\tilde{\rho}^{(\beta)} V_I^\dag}\tr{V_I\rho V_I^\dag}}
        &\leq&\sqrt{\sum_{I\in\mathcal O^n} \tr{M_\beta V_I\tilde{\rho}^{(\beta)} V_I^\dag}}\sqrt{\sum_{I\in\mathcal O^n}\tr{V_I\rho V_I^\dag}}\\
        &=&\sqrt{\tr{M_\beta\Phi^n(\tilde{\rho}^{(\beta)})}}\sqrt{\tr{\Phi^n(\rho)}}\\
        &=&\sqrt{\tr{{\Phi^\dag}^n(M_\beta)\tilde{\rho}^{(\beta)})}}\\&=&\sqrt{\tr{M_\beta\tilde{\rho}^{(\beta)}}}=1.
    \end{eqnarray*}
    Then we have equality if there exists $\lambda$ such that for all $I\in\mathcal O^n$ we have $\operatorname{tr}(V_I\tilde{\rho}^{(\beta)}V_I^\dag)=\lambda\operatorname{tr}(V_I\rho V_I^\dag)$. Summing over $I$ yields $\lambda=1$, and the lemma is proved.
    \end{proof}

Now we are in the position to prove Theorem \ref{thm:control}.
\subsection{Proof of Theorem \ref{thm:control}}
\begin{customproof}[Proof]
    Let $q \in \mathbb N$. We will first show that
    $$ \EE{Z(\rho_{(q+1)N}}{\rho_{qN}} \leq \bar \kappa Z(\rho_{qN}), $$ for some $\bar \kappa <1.$ To this end, we consider different possibilities for $\rho_{qN}$ and split the study of the increment to whether it belongs to the sets 
    $S_{<\delta},~S_{> 1-\delta},
    $ or $S_{\geq\delta}\cap S_{\leq 1-\delta}$. Let us treat the three different cases separately.\\

\paragraph{The case $\rho_{qN} \in S_{<\delta}$} We have $R(\rho) \leq \sqrt{l-1},$ for all $\rho\in\mathcal D.$
Hence
\begin{align*}
    \EE{Z(\rho_{(q+1)N})}{\rho_{qN}}&=\min_{u \in [-\ubar, \ubar]}\EE{Z ( U(u) \rho_{(q+1)N-\frac{1}{2}}  U(u)^\dag    )}{\rho_{qN}}\\
    &\leq \min_{u \in [-\ubar, \ubar]}\EE{V ( U(u) \rho_{(q+1)N-\frac{1}{2}}  U(u)^\dag    )}{\rho_{qN}} + \varepsilon \sqrt{l-1}\\
    &=  \min_{u \in [-\ubar, \ubar]} \sum_{I\in\mathcal O^N} \tr{V_I \rho_{qN} V_I^\dag} \sqrt{1 -\frac{\tr{\Malphabar U(u) V_I \rho_{qN} V_I^\dag U(u)^\dag}}{\tr{V_I \rho_{qN} V_I^\dag}}}+ \varepsilon \sqrt{l-1}\\
    &\leq  \min_{u \in [-\ubar, \ubar]} \sqrt{1-\tr{\Malphabar U(u) \Phi^N(\rho_{qN}) U(u)^\dag}}+ \varepsilon \sqrt{l-1}
\end{align*}
by Cauchy-Schwarz inequality. Then 
\begin{align*}
    \EE{Z(\rho_{(q+1)N})}{\rho_{qN}}&\leq \sup_{\rho \in S_{<\delta}} \min_{u \in [-\ubar, \ubar]} \sqrt{1-\tr{\Malphabar U(u) \Phi^N(\rho ) U(u)^\dag}}+ \varepsilon \sqrt{l-1}\\
    &= \sqrt{1 - \inf _{\rho \in S_{<\delta}} \max_{u \in [-\ubar, \ubar]} \tr{\Malphabar U(u) \Phi^N(\rho ) U(u)^\dag}  }+ \varepsilon \sqrt{l-1}.
\end{align*}
Using Lemma \ref{lem:control}, as long as $\delta \leq \delta_0,$ we have\begin{align*}
    \EE{Z(\rho_{(q+1)N})}{\rho_{qN}} &\leq \sqrt{1-2\delta_0} + \varepsilon \sqrt{l-1}\\
    &\leq  \o \frac{\sqrt{1-2\delta_0}+\varepsilon\sqrt{l-1}}{\sqrt{1-\delta}} \c V(\rho_{qN})\\
    &\leq \o \frac{\sqrt{1-2\delta_0}+\varepsilon\sqrt{l-1}}{\sqrt{1-\delta}} \c  \o V(\rho_{qN})+\epsilon R(\rho_{qN}) \c\\
    &=\kappa_0 Z(\rho_{qN})
\end{align*}
with $\kappa_0 :=\frac{\sqrt{1-2\delta_0}+\varepsilon\sqrt{l-1}}{\sqrt{1-\delta}}$. In the second inequality, we used the fact that $\rho_{qN}$ belongs to $S_{<\delta}$, so $\frac{V(\rho_{qN})}{\sqrt{1-\delta}} \geq 1.$ We need to ensure that $\kappa_0 < 1.$ For this it is sufficient that 
$$ \varepsilon < \frac{\sqrt{1-\delta}-\sqrt{1-2\delta_0}}{\sqrt{l-1}},$$
and from now we set $\varepsilon>0$ as verifying this inequality.
\bigskip

\paragraph{The case $\rho_{qN} \in S_{\geq\delta}\cap S_{\leq 1-\delta}$} Let us first treat the evolution of $R(\rho)$ before the control step. For this function, using Lemma \ref{lem:CS2}, we obtain 
\begin{eqnarray*}
\EE{R(\rho_{(q+1)N-\frac{1}{2}})}{\rho_{qN}} &=&\sum_{\beta\neq\bar\alpha}\sum_{I\in\mathcal O^N}\sqrt{\tr{M_\beta\frac{V_I\rho_{qN} V_I^\dag}{\tr{V_I\rho_{qN} V_I^\dag}}}} \tr{V_I\rho_{qN} V_I^\dag}\\
&=&\sum_{\beta\neq\bar\alpha}\sum_{I\in\mathcal O^N}\sqrt{\tr{M_\beta V_I\rho_{qN} V_I^\dag} \tr{V_I\rho_{qN} V_I^\dag}}\nonumber\\
&=&\sum_{\beta\neq\bar\alpha}\mathds 1_{\tr{M_\beta\rho_{qN}}>0}\sum_{I}\sqrt{\tr{M_\beta V_I\tilde\rho^{(\beta)}_{kN} V_I^\dag} \tr{V_I\rho_{qN} V_I^\dag}}\sqrt{\tr{M_\beta\rho_{qN}}}\nonumber\\
&\leq&\sum_{\beta\neq\bar\alpha}\sqrt{\tr{M_\beta\rho_{qN}}}.\nonumber
\label{eq:sp}
\end{eqnarray*}
In the last line, equality is achieved if and only if  for all $\beta\neq\bar\alpha$ such that $\tr{M_\beta\rho_{qN}}>0,$ we can apply the equality case of Lemma \ref{lem:CS2}. Therefore for all $\beta\neq\bar\alpha$ such that $\tr{M_\beta\rho_{qN}}>0,$ we have that for all $I\in\mathcal O^N$  
\begin{equation}\label{eq:equality22}\tr{V_I\tilde\rho_{qN}^{(\beta)}V_I^\dag}=\tr{V_I\rho_{qN} V_I^\dag},
\end{equation}
where we recall that the notation $\tilde\rho^{(\beta)}$ holds for $\tilde\rho^{(\beta)}=\frac{M_\beta\rho M_\beta}{\tr{M_\beta\rho}},$ when $\tr{M_\beta\rho}\neq0.$ Let us check that it is not possible. We shall consider two situations. First suppose that there exist $\beta_1$ and $\beta_2$ with $\tr{M_{\beta_1}\rho_{qN}}>0$ and $\tr{M_{\beta_2}\rho_{qN}}>0.$ Then equality \eqref{eq:equality22} implies that for all $I\in\mathcal O^N$  
\begin{equation*}\label{eq:equality222}\tr{V_I\tilde\rho_{qN}^{(\beta_1)}V_I^\dag}=\tr{V_I\tilde\rho_{qN}^{(\beta_2)}V_I^\dag}.
\end{equation*} This is in contradiction with Proposition \ref{IDmod}. Then assume that there exists only one $\beta_1$ which verifies $\operatorname{tr}(M_{\beta_1}\rho_{qN})>0$ (the existence of $\beta_1$ is ensured by the fact that $\rho_{qN} \in S_{\geq\delta}\cap S_{\leq 1-\delta}$). Here we have $\operatorname{tr}(M_{\bar\alpha}\rho_{qN})+\operatorname{tr}(M_{\beta_1}\rho_{qN})=1$ which yields
$$\tr{V_I\tilde\rho_{qN}^{(\beta_1)}V_I^\dag}=\tr{V_I\rho_{qN} V_I^\dag}=\tr{M_{\bar\alpha}\rho_{qN}}\tr{V_I\tilde\rho_{qN}^{(\bar\alpha)} V_I^\dag}+\tr{M_{\beta_1}\rho_{qN}}\tr{V_I\tilde\rho_{qN}^{(\beta_1)} V_I^\dag}.$$ At this stage, we then have $$\tr{V_I\tilde\rho_{qN}^{(\bar\alpha)} V_I^\dag}=\tr{V_I\tilde\rho_{qN}^{(\beta_1)} V_I^\dag},$$ which once again contradicts Proposition \ref{IDmod}. This so proves a strict inequality in \eqref{eq:sp}. Now define on $S_{\geq\delta}\cap S_{\leq 1-\delta}$ the quantity $$\kappa(\rho)=\frac{\displaystyle{\sum_{\beta\neq\bar\alpha}\sum_{I\in\mathcal O^N}\sqrt{\tr{M_\beta V_I\rho V_I^\dag} \tr{V_I\rho V_I^\dag}}}}{R(\rho)},$$ which is well defined for $\rho\in S_{\geq\delta}\cap S_{\leq 1-\delta}$. The application $\rho\rightarrow \kappa(\rho)$ is continuous on $S_{\geq\delta}\cap S_{\leq 1-\delta}$ and one can define   $$\kappa_1:=\sup_{\rho\in S_{\geq\delta}\cap S_{\leq 1-\delta}}\kappa(\rho).$$ By compactness of the set  $S_{\geq\delta}\cap S_{\leq 1-\delta}$ and the continuity, we have $\kappa_1<1$. As a result,
$$\EE{R(\rho_{(q+1)N-\frac{1}{2}})}{\rho_{qN}} \leq \kappa_1 R(\rho_{qN}).$$
Now concerning the function $V,$ we have
$$\EE{V(\rho_{(q+1)N-\frac{1}{2}})}{\rho_{qN}} \leq V(\rho_{qN}).$$
Hence, for the sum $Z=V + \varepsilon R,$
$$\EE{Z(\rho_{(q+1)N-\frac{1}{2}})}{\rho_{qN}} \leq V(\rho_{qN}) + \kappa_1  \varepsilon R(\rho_{qN}).$$
The feedback control is minimizing the conditional expectation $\EE{Z(U(u)\rho_{(q+1)N-\frac{1}{2}} U(u)^\dag}{\rho_{qN}}$ over $u\in [-\ubar, \ubar]$. In particular, this conditional expectation is equal to $\EE{Z(\rho_{(q+1)N-\frac{1}{2}})}{\rho_{qN}}$ for $u=0.$ Consequently
\begin{align*}  
\EE{Z(\rho_{(q+1)N})}{\rho_{qN}} &\leq \EE{Z(\rho_{(q+1)N-\frac{1}{2}})}{\rho_{qN}} \\
&\leq V(\rho_{qN}) + \kappa_1  \varepsilon R(\rho_{qN}).
\end{align*}
Now we shall ensure that $ V(\rho_{qN}) + \kappa_1 \varepsilon R(\rho_{qN}) \leq \tilde \kappa_1 (   V(\rho_{qN}) + \varepsilon R(\rho_{qN}) )$ for some $\tilde \kappa_1 < 1$. The following framing relationship holds:
$$\frac{1}{\sqrt{\ell-1}} R(\rho) \leq  V(\rho) \leq R(\rho).$$
Then the expected inequality $ V(\rho_{qN}) + \kappa_1 \varepsilon R(\rho_{qN}) \leq \tilde \kappa_1 (   V(\rho_{qN}) + \varepsilon R(\rho_{qN}) )$ is obtained if
$$ V(\rho) \leq \frac{(\tilde \kappa_1 - \kappa_1) \varepsilon}{1-\tilde \kappa_1} R(\rho)$$
which will be verified if $\frac{(\tilde \kappa_1  -\kappa_1) \varepsilon }{1-\tilde \kappa_1} \geq 1.$ The smallest value for $\tilde \kappa_1$ satisfying this inequality is 
$$\tilde \kappa_1 = \frac{1+\varepsilon \kappa_1}{1 + \varepsilon} <1$$
giving us the desired relation
$$\EE{Z(\rho_{(q+1)N})}{\rho_{qN}} < \tilde \kappa_1 Z(\rho_{qN}).$$
\bigskip

\paragraph{The case $\rho_{qN} \in S_{> 1-\delta}$} Analysing again first the behaviour of $R(\rho)$ before the control step, we get
\begin{align*}
&\EE{R(\rho_{(q+1)N-\frac{1}{2}})}{\rho_{qN}}\\=&\sum_{I\in\mathcal O^N}\sum_{\beta\neq \bar\alpha}\sqrt{\tr{M_\beta V_I \rho_{qN} V_I^\dag}}\sqrt{\tr{V_I\rho_{qN} V_I^\dag}}\\
=&\sum_{I\in\mathcal O^N}\sum_{\beta\neq \bar\alpha}\sqrt{\tr{M_\beta V_I \rho_{qN} V_I^\dag}}\sqrt{\tr{M_{\bar\alpha}V_I\rho_{qN} V_I^\dag}+\tr{V_I\rho_{qN} V_I^\dag}-\tr{M_{\bar\alpha}V_I\rho_{qN} V_I^\dag}}\\
\leq& \sum_{I\in\mathcal O^N}\sum_{\beta\neq \bar\alpha}\sqrt{\tr{M_\beta V_I \rho_{qN} V_I^\dag}}\sqrt{\tr{M_{\bar\alpha}V_I\rho_{qN} V_I^\dag}}\\
\quad\quad&+\sum_{I\in\mathcal O^N}\sum_{\beta\neq \bar\alpha}\sqrt{\tr{M_\beta V_I \rho_{qN} V_I^\dag}}\sqrt{\tr{V_I\rho_{qN} V_I^\dag}-\tr{M_{\bar\alpha}V_I\rho_{qN} V_I^\dag}},\\
\end{align*}
where for the last inequality, we have used $\sqrt{a+b}\leq \sqrt{a}+\sqrt{b}.$ 
To bound the first term in the expression above, we use the same idea as previously applied in the proof of Theorem \ref{thm:expselect}. Similarly to the definition of $\kappa$ in Lemma \ref{lem:rate}, we set $$\kappa'=\sup_{\beta\neq\bar\alpha}\sup_{\rho\in\mathcal A_{\bar\alpha,\beta}}\sum_{I\in\mathcal O^N} \sqrt{ \tr{V_I \rho^{(\bar\alpha)} V_I^\dag} \tr{V_I \rho^{(\beta)} V_I^\dag }}\leq \kappa<1.$$ The second term can be bounded by applying Cauchy-Schwarz inequality. Then we get
\begin{align*}
&\EE{R(\rho_{(q+1)N-\frac{1}{2}})}{\rho_{qN}}\\
\leq\,& \kappa' R(\rho_{qN})+\sum_{\beta\neq \bar\alpha}\sqrt{\sum_{I\in\mathcal O^N}\tr{M_\beta V_I \rho_{qN} V_I^\dag}}\sqrt{\sum_{I\in\mathcal O^N}\left(\tr{V_I\rho_{qN} V_I^\dag}-\tr{M_{\bar\alpha}V_I\rho_{qN} V_I^\dag}\right)} \\
=\,&\kappa'R(\rho_{qN})+\sum_{\beta\neq\bar\alpha}\sqrt{\tr{M_\beta \rho_{qN}}}\sqrt{1-\tr{M_{\bar\alpha}\rho_{qN}}}\\
\leq\,& \kappa' R(\rho_{qN})+\sqrt\delta R(\rho_{qN})\\
=\,&\kappa_2R(\rho_{qN})
\end{align*}
where  $\kappa_2:=\kappa'+\sqrt\delta.$ Here appears a new constraint on the parameter $\delta$ in order to have $\kappa_2 < 1:$ it must satisfy $\sqrt{\delta} < 1 - \kappa'.$ Finally, combining this with the constraint deduced in the first case $\rho_{qN} \in S_{<\delta}$, we set $\delta>0$ as verifying $\sqrt\delta<\min(\sqrt{\delta_0},1-\kappa').$

Hence again we have 
$$\EE{Z(\rho_{(q+1)N-\frac{1}{2}})}{\rho_{qN}} \leq V(\rho_{qN}) + \kappa_2  \varepsilon R(\rho_{qN})$$
and we can repeat the calculations made in the previous paragraph to state that
$$\EE{Z(\rho_{(q+1)N})}{\rho_{qN}} \leq \tilde \kappa_2 Z(\rho_{qN})$$
with $\tilde \kappa_2 = \frac{1+\varepsilon \kappa_2 }{1 + \varepsilon} <1. $
\bigskip
 
\paragraph{Conclusion} Combining the outcomes obtained for the three different cases, we can conclude that for all $q \in \mathbb N,$
$$\EE{Z(\rho_{(q+1)N})}{\rho_{qN}} \leq \bar \kappa Z(\rho_{qN})$$
with $\bar \kappa = max(\kappa_0, \tilde \kappa_1, \tilde \kappa_2)<1.$
Then there exists $  C >0$ such that $\E{Z(\rho_{qN})} \leq   C {\bar \kappa}^q$ for all $q \in \mathbb N$. Now for a generic $n \in \mathbb N$ whose Euclidean division by $N$ is $n=qN+p,$ with $q=\lfloor \frac{n}{N} \rfloor$ and $0 \leq p < N,$ we have
\begin{align*}
  \EE{\tr{M_\alpha \rho_{n}}}{\rho_{qN}}&=\sum_{I\in\mathcal O^p} \tr{ M_\alpha \frac{ V_I \rho_{qN} V_I^\dag}{\tr{V_I \rho_{qN} V_I^\dag}} } \tr{V_I \rho_{qN} V_I^\dag}\\
  &= \tr{\Phi^\dag( M_\alpha  ) \rho_{qN}}\\
  &= \tr{M_\alpha \rho_{qN}}.
\end{align*}
    As $Z$ is a concave function, by Jensen's inequality we get
$$\mathbb E[Z(\rho_n)\vert \rho_{qN}]\leq Z(\rho_{qN}).$$
Hence
\begin{align*}
\E{Z( \rho_{n})} &\leq \E{Z( \rho_{qN})}\\
&\leq  C {\bar \kappa}^{q}\\
&\leq  C {\bar \kappa}^{ \frac{n}{N} - 1 }.
\end{align*}
Consequently,
$$ \E{\sqrt{1-\tr{\Malphabar \rho_n}}}  \leq  C {\bar \kappa}^{ \frac{n}{N} - 1 }. $$
Thus, we have shown the exponential convergence with $\bar \gamma = -\ln(\bar \kappa^{\frac{1}{N}} )$ and $\bar C =  C \kappa^{-1}.$
\end{customproof}

\section{Conclusion}
The large-time behavior of quantum trajectories has been the subject of numerous developments starting from the pioneering work \cite{KuMA}. From a Markov point of view, invariant measure and convergence have been studied in \cite{Barchinv,benoist2019invariant,Benoistinvcont}. This has led to results on limit theorems for quantum trajectories \cite{Bjanluka}. A large number of contributions has also been derived for the processes of measurement outcomes, limit theorems \cite{Carbonecentral,sabot,BePelLin,van2015sanov,Carbone2014HomogeneousOQ}, ergodic theory \cite{KuMa1,Benoistentrop1,Benoistentrop2}, concentration of measure \cite{Girconcentr,Benoist2021DeviationBA}. All these works are essentially based on a Markovian approach to the large-time behavior problem. 

The current work contributes to the understanding of the large-time behavior of quantum trajectories with an approach based on Lyapunov functions. Our work has focused on the comprehension of the selection of minimal invariant subspaces and its speed of convergence. The decomposition of the Hilbert space into a direct sum of minimal invariant subspaces generalizes the popular "Quantum Non-Demolition measurement" formalism. In our article, we have shown that the identification of the different minimal invariant states through indirect measurement allows the uniform identification of the different minimal invariant subspaces, leading to the exponential convergence result.  
%Our findings broaden the existing results of the QND context.

We proposed a feedback control that exponentially stabilizes in mean a chosen minimal invariant subspace. This feedback is active every $N$ steps, corresponding to the length of identifiability (also possible with any $N'\geq N$). It is a promising avenue that warrants further investigation, particularly for its effectiveness in quantum error correction codes, see e.g., \cite{ahn2002continuous}. Specifically, the stabilization of subspaces %is of particular interest as it
helps to protect the code spaces from environmental perturbation.

Here we have addressed the discrete-time quantum repeated measurement models; things are slightly different in continuous time, especially concerning the identifiability condition, but follow the same line (work in progress). We have chosen to present the ideal measurement case in the paper. The generalization for imperfect measurement is straightforward, and all the proofs can be adapted. 
%\textcolor{red}{C: je ne sais pas a quel point il y a un inetret a pouvoir faire un control apres un temps N au lieu de 1. Si c'est vrai alors c'est interressant de le souligner mais de souligner aussi que quand $N$ grandit alors on degrade la vitesse de convergence}
\section*{Acknowledgement}
    The authors would like to warmly thank Tristan Benoist for valuable discussions and improvements of the manuscript.

%\bigskip
\bibliographystyle{plain}  
\bibliography{refs}

\end{document}